\definecolor{DarkBlue}{rgb}{0.1,0.1,0.5}
\definecolor{DarkGreen}{rgb}{0.1,0.5,0.1}
\newtheorem{lemma}{Lemma}
\newtheorem{theorem}{Theorem}
\newtheorem{corollary}[lemma]{Corollary}
\newtheorem{definition}{Definition}
\newtheorem*{remark}{Remark}
\newcounter{casenum}
\DeclareMathOperator*{\argmax}{arg\,max}
\DeclareMathOperator*{\argmin}{arg\,min}
\newcommand{\EFx}{\textsf{EFx}}
\newcommand{\tEFx}{\textsf{tEFx}}
\newcommand{\PMMS}{\textsf{PMMS}}
\newcommand{\MMS}{\textsf{MMS}}
\newcommand{\alloc}{\mathcal{A}}
\newcommand{\EFone}{\textsf{EF1}}
\newcommand{\labase}{\textsc{LaBase}}
\newcommand{\mut}{\mu^{(2)}} 
\newcommand{\omut}{\widetilde{\mu}^{(2)}} 
\newcommand{\tilv}{\widetilde{v}}
\newcommand{\basev}{\overline{v}}
\begin{document}

\title{\bfseries Parameterized Guarantees for Almost Envy-Free Allocations}
\author{Siddharth Barman\thanks{Indian Institute of Science. {\tt barman@iisc.ac.in}} \qquad Debajyoti Kar\thanks{Indian Institute of Science. {\tt debajyotikar@iisc.ac.in}} \qquad Shraddha Pathak\thanks{Indian Institute of Science. {\tt shraddha.sunilpathak@gmail.com}}}
\date{}

\maketitle

\begin{abstract}
We study fair allocation of indivisible goods among agents with additive valuations. We obtain novel approximation guarantees for three of the strongest fairness notions in discrete fair division, namely envy-free up to the removal of any positively-valued good ($\EFx$), pairwise maximin shares ($\PMMS$), and envy-free up to the transfer of any positively-valued good ($\tEFx$). Our approximation guarantees are in terms of an instance-dependent parameter $\gamma \in (0,1]$ that upper bounds, for each indivisible good in the given instance, the multiplicative range of \emph{nonzero} values for the good across the agents.  

First, we consider allocations wherein, between any pair of agents and up to the removal of any positively-valued good, the envy is multiplicatively bounded. Specifically, the current work develops a polynomial-time algorithm that computes a $\left( \frac{2\gamma}{\sqrt{5+4\gamma}-1}\right)$-approximately $\EFx$ allocation for any given fair division instance with range parameter $\gamma \in (0,1]$. For instances with $\gamma \geq 0.511$, the obtained approximation guarantee for $\EFx$ surpasses the previously best-known approximation bound of $(\phi-1) \approx 0.618$, here $\phi$ denotes the golden ratio. 

Furthermore, we study multiplicative approximations for $\PMMS$. For fair division instances with range parameter $\gamma \in (0,1]$, the current paper develops a polynomial-time algorithm for finding allocations wherein the $\PMMS$ requirement is satisfied, between every pair of agents, within a multiplicative factor of $\frac{5}{6} \gamma$. En route to this result, we obtain novel existential and computational guarantees for $\frac{5}{6}$-approximately $\PMMS$ allocations under restricted additive valuations. 

Finally, we develop an algorithm that---for any given fair-division instance with range parameter $\gamma$---efficiently computes a $2\gamma$-approximately $\tEFx$ allocation. Specifically, we obtain existence and efficient computation of exact $\tEFx$ allocations for all instances with $\gamma \in [0.5, 1]$.     
\end{abstract}

\section{Introduction}
Discrete fair division is an active field of research that studies fair allocation of indivisible goods among agents with individual preferences \cite{survey,moulin2019fair}. Extending the reach of classic fair division literature, which predominantly focused on divisible goods \cite{moulin2004fair,handbook2016}, this budding field addresses resource-allocation settings wherein the underlying resources have to be integrally allocated and cannot be fractionally assigned among the agents. Motivating real-world instantiations of such allocation settings include fair division of public housing units \cite{deng2013story,benabbou2020finding}, courses \cite{Budish2017CourseMA}, and food donations \cite{aleksandrov2015online,prendergast2017food}. The widely-used platform Spliddit.org \cite{shah2017spliddit,goldman2015spliddit} provides fair division solutions for oft-encountered allocation scenarios. 

Classic fairness criteria---such as envy-freeness \cite{foley1966resource,varian1974equity} and proportionality \cite{dubins1961cut}---were formulated considering divisible goods and cannot be upheld in the context of indivisible goods. In particular, envy-freeness requires that every agent is assigned a bundle that she values at least as much as any other agent's bundle, and,  indeed, an envy-free allocation does not exist if we have to assign a single indivisible good among multiple agents. Motivated by such considerations, a key conceptual thrust in discrete fair division has been the development of meaningful analogs of classic fairness notions. 

A notably compelling analog of envy-freeness is obtained by considering envy-freeness up to the removal of any positively-valued good ($\EFx$). Introduced by Caragiannis et al.~\cite{caragiannis2019unreasonable}, $\EFx$ considers the elimination of envy between any pair of agents via the hypothetical removal of any  positively-valued good from the other agent's bundle.\footnote{As in \cite{caragiannis2019unreasonable}, we consider $\EFx$ with the removal of any \emph{positively-valued} good from the other agent's bundle. A stronger version, wherein one requires envy-freeness up to the removal of every good (positively valued or not), has also been considered in the literature; see, e.g., \cite{chaudhury2021improving}.} Despite significant interest and research efforts, the universal existence of $\EFx$ allocations remains unsettled. In fact, establishing the existence of $\EFx$ allocations is widely regarded as one of central open questions in discrete fair division \cite{procaccia2020technical}.  

As yet, the existence of \emph{exact} $\EFx$ allocations has been established in rather specific settings; see related works mentioned below. Hence, to obtain fairness guarantees in broader contexts, recent works have further focused on computing allocations that are approximately $\EFx$.\footnote{Such algorithmic results imply commensurate existential guarantees for approximately $\EFx$ allocations.} In particular, the work of Plaut and Roughgarden \cite{plaut2020almost} provides an exponential-time algorithm for finding allocations wherein, between any pair of agents and up to the removal of any good, the envy is multiplicatively bounded by a factor of $1/2$; this result for $\frac{1}{2}$-approximately $\EFx$ allocations holds under subadditive valuations. Subsequently, for additive valuations Amanatidis et al.~\cite{amanatidis2020multiple} developed a polynomial-time algorithm for finding $\left(\phi - 1\right)$-approximately $\EFx$ allocations, here $\phi$ is the golden ratio. Farhadi et al.~\cite{farhadi2021almost} developed an efficient algorithm with the same approximation factor of $ \left(\phi - 1\right) \approx 0.618$. For allocating chores, the work of Zhou and Wu \cite{zhouapproximately} gave an algorithm that computes a 5-approximate $\EFx$ allocation when the number of agents is three, and achieves an approximation factor of $3n^2$ for $n\ge 4$ agents.

The current work contributes to this thread of research by developing novel approximation guarantees for $\EFx$ and related fairness notions. Our guarantees are in terms of an instance-dependent parameter $\gamma \in (0,1]$. In particular, for each (indivisible) good $g$ in the given fair division instance, we define the range parameter $\gamma_g \in (0,1]$ as ratio between the smallest \emph{nonzero} value of $g$ and largest value for $g$ among the agents (see equation (\ref{ineq:gamma-g})). Furthermore, we define the range parameter $\gamma$ for the given instance as the minimum $\gamma_g$ across all goods $g$. Hence, the parameter $\gamma$ bounds, for each indivisible good $g$, the multiplicative range of \emph{nonzero} values for $g$ among the agents. 

Note that a high $\gamma \in (0,1]$ does not rule out drastically different values across different goods. In particular, for restricted additive valuations (see, e.g., \cite{akrami2022ef2x}) the parameter $\gamma = 1$; recall that, under restricted additive valuations, every good $g$ has an associated base value, and each agent either values $g$ at zero or at its base value. Further, $\gamma = 1$ under binary, additive valuations. Also, for instances in which there are only two possible values, $0 < a < b$, for all the goods, we have $\gamma = a/b$; such instances have been studied recently in fair division; see, e.g., \cite{garg2022fair,feige2022maximin,akrami2022maximizing,ebadian2022fairly}. 

\begin{figure}[h]
\centering
\begin{minipage}{.45\textwidth}
  \centering
  \includegraphics[scale=0.3]{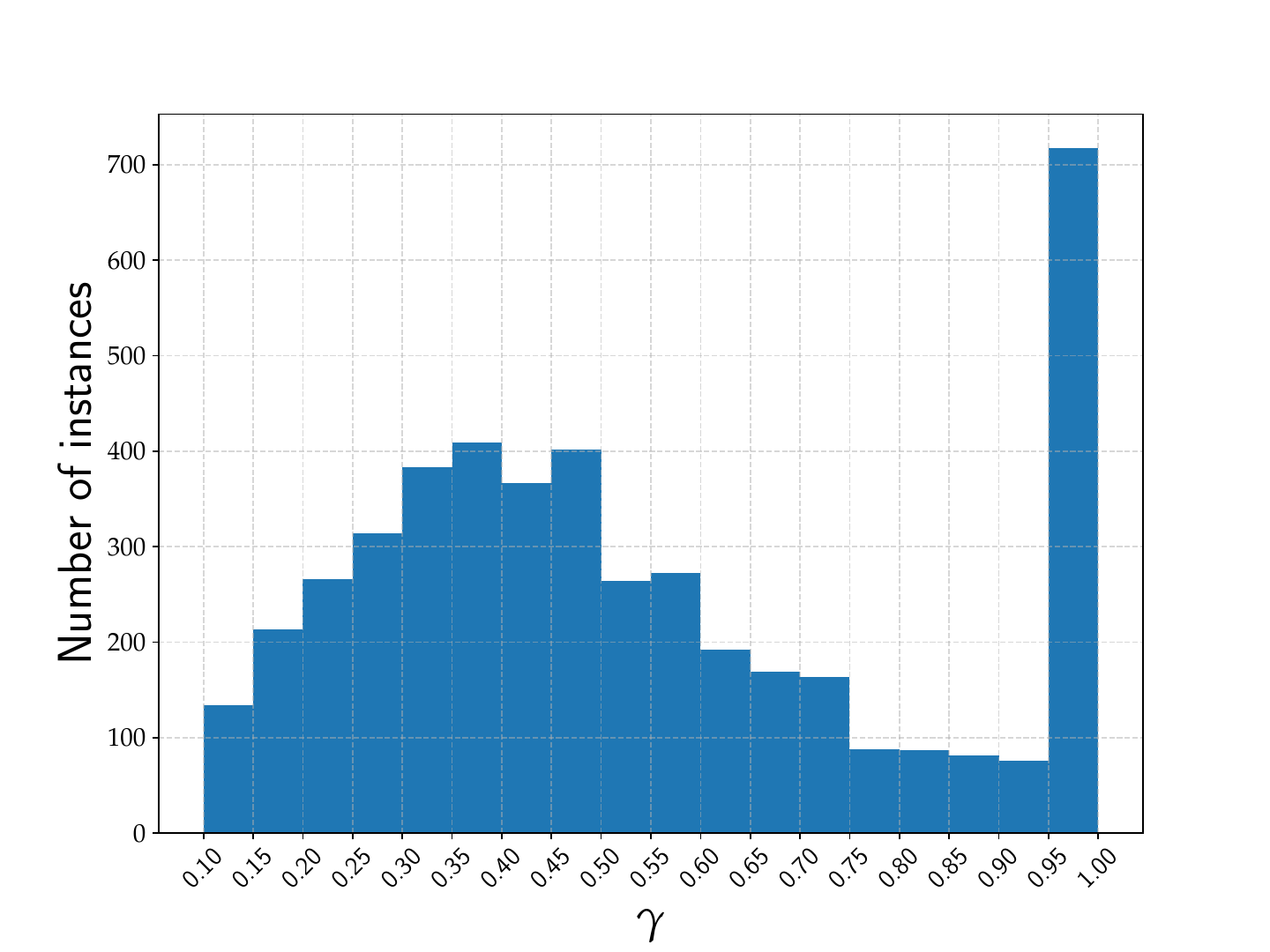}
 \end{minipage}%
\begin{minipage}{.45\textwidth}
  \centering
  \includegraphics[scale=0.3]{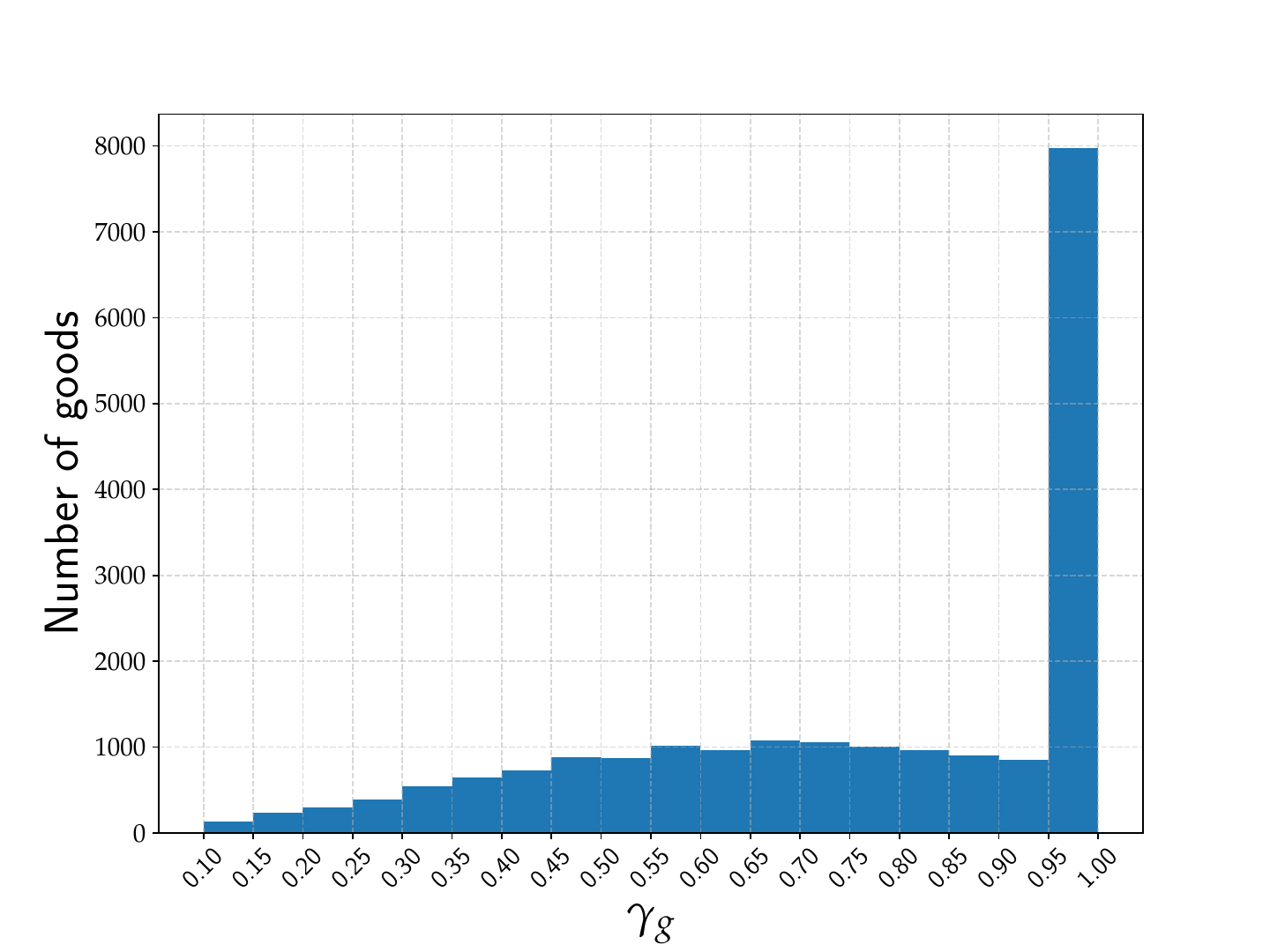}
\end{minipage}
\caption{Histograms from Spliddit.org data.}
\label{figure:spliddit-gamma}
\vspace*{-10pt}
\end{figure}
Interestingly, higher values of $\gamma$ are prevalent across user-submitted instances at Spliddit.org \cite{shah2017spliddit,goldman2015spliddit}. As mentioned previously, this platform has been widely used to solve, in particular, discrete fair division problems submitted by users. We consider the data gathered at Spliddit.org (over several years) and evaluate the range parameters of the goods and instances: Figure \ref{figure:spliddit-gamma} plots the number of (user-submitted) discrete fair division instances associated with different bins of the range parameter $\gamma \in (0,1]$. The figure also provides a similar histogram for the range parameters $\gamma_g$ of the goods $g$. Here, for every submitted instance and each participating agent, we set the goods' values below $10\%$ of the agent's value for the grand bundle to zero -- we perform this calibration to account for users' aversion toward reporting zero valuations. Overall, these observations highlight the relevance of focussing on fair division instances with relatively high range parameter $\gamma$; specifically, $85.23\%$ indivisible goods have $\gamma_g \geq 0.51$ and $47.53\%$ of submitted instances have range parameter $\gamma \geq 0.51$. \\

\noindent
{\bf Our Results and Techniques.}  
We develop a polynomial-time algorithm that computes a $\left( \frac{2\gamma}{\sqrt{5+4\gamma}-1}\right)$-$\EFx$ allocation for any given fair division instance with additive valuations and range parameter $\gamma \in (0,1]$ (Theorem \ref{theorem:EFx} in Section \ref{section:EFx}). Note that the obtained approximation factor $\frac{2\gamma}{\sqrt{5+4\gamma}-1} > \gamma$ for all $\gamma \in (0,1)$; see Figure \ref{figure:efx-apx}. Furthermore, for any $\gamma \geq 0.511$, the obtained guarantee $ \frac{2\gamma}{\sqrt{5+4\gamma}-1} \geq 0.618$. Hence, for instances with $\gamma \geq 0.511$, our $\EFx$ result surpasses the best-known approximation bound of $(\phi-1) \approx 0.618$ \cite{amanatidis2021maximum,farhadi2021almost}.\footnote{Recall that $\phi$ denotes the golden ratio.} As mentioned previously, a notable fraction of discrete fair division instances submitted at Splitddit.org have such a high $\gamma$. 

\begin{figure}[h]
\begin{center}
\includegraphics[scale=0.65]{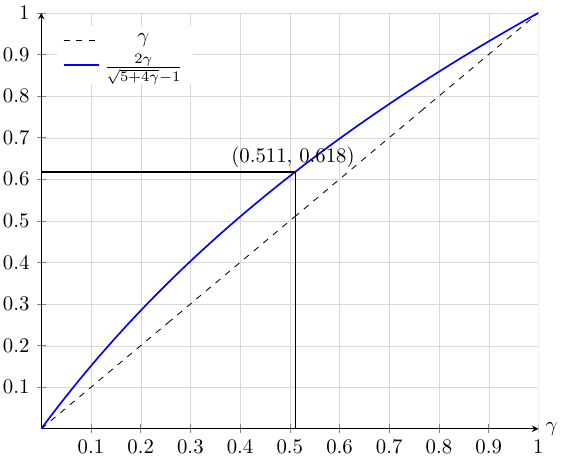}
\end{center}
\vspace*{-10pt}
\caption{Approximation factor $\frac{2\gamma}{\sqrt{5 + 4\gamma} - 1}$ achieved for $\EFx$.}
\label{figure:efx-apx}
\end{figure}

Our $\EFx$ algorithm is based on an adroit extension of the envy-cycle-elimination method of Lipton et al.~\cite{lipton2004approximately}. We assign goods iteratively while considering them in decreasing order of their base values (formally defined in Section \ref{section:notation}). Here, for each agent, the first assigned good is judiciously selected via a look-ahead policy; see Section \ref{section:EFx} for details. 

Another relevant fairness criterion for indivisible goods is the notion of pairwise maximin shares ($\PMMS$). Specifically, for any agent $i$, a $1$-out-of-$2$ maximin share from a set $S$ is defined as the maximum value that the agent can guarantee for herself by proposing a partition of $S$ into two subsets---$X$ and $S \backslash X$---and then receiving the minimum valued one between the two. An allocation is said to be $\PMMS$-fair if every agent $i$ receives at least her $1$-out-of-$2$ maximin share when considering---for each other agent $j$---the set $S$ as the union of the bundles assigned to $i$ and $j$. Under additive valuations, $\PMMS$ is, in fact, a stronger notion than $\EFx$: any $\PMMS$ allocation is guaranteed to be $\EFx$ \cite{caragiannis2019unreasonable}. As for $\EFx$, the existence of $\PMMS$ allocations is an interesting open question \cite{survey}. 

In this paper, we study multiplicative approximations for $\PMMS$. It is known that, under additive valuations, there always exists an allocation in which the $\PMMS$ requirement is satisfied between every pair of agents, $i$ and $j$, within a multiplicative factor of $0.781$ \cite{David_thesis}. For fair division instances with range parameter $\gamma \in (0,1]$, the current paper develops a polynomial-time algorithm for finding allocations wherein the $\PMMS$ requirement is satisfied between every pair of agents, $i$ and $j$, within a multiplicative factor of $\frac{5}{6} \gamma$ (Theorem \ref{theorem:pmms} in Section \ref{section:pmms}). This algorithmic result implies the existence of $\frac{5}{6}\gamma$-approximately $\PMMS$ allocations for instances with $\gamma \in (0,1]$. En route to this result, we obtain novel existential and computational guarantees for $\frac{5}{6}$-approximately $\PMMS$ allocations under restricted additive valuations. 

Recall that $\EFx$ considers the elimination of envy between pairs of agents via the hypothetical removal of any positively-valued good from the other agent's bundle. Our final fairness notion, envy freeness up to transfer of a good ($\tEFx$) considers a hypothetical transfer of any positively-valued good from the other agent's bundle to eliminate envy. This notion was defined in \cite{yin2022envy_tEFx} in the context of chores (negatively-valued items), and we adapt its definition for positively-valued goods (see Definition \ref{defn:tEfx}). We develop an algorithm that takes as input a fair division instance with range parameter $\gamma$ and efficiently computes a $2\gamma$-$\tEFx$ allocation (Theorem \ref{theorem:tEFx} in Section \ref{section:tEFx}). Specifically, for $\gamma \in [0.5, 1]$, we obtain an exact $\tEFx$ allocation (Corollary \ref{corollary:exact-tEFx}). Hence, the developed algorithm guarantees exact $\tEFx$ allocations for nearly half the real-world instances submitted at Spliddit.org. \\

\noindent
{\bf Additional Related Work.} Caragiannis et al.~\cite{caragiannis2019unreasonable} formulated the notion of $\EFx$ and showed that, under additive valuations, any (exact) $\PMMS$ allocation is also $\EFx$. Gourv{\`e}s et al.~\cite{gourves2014near} considered a notion similar to $\EFx$ in the context of matroids. 

Despite marked research efforts, the existence of exact $\EFx$ allocations has  been established in specific settings: $\EFx$ allocations are known to exist when agents have identical valuations \cite{plaut2020almost}, when the number of agents is two \cite{plaut2020almost} or three \cite{chaudhury2020efx}, when the number of goods is at most three more than the number of agents \cite{mahara2023extension}, or when agents have binary \cite{barmangreedy}, bi-valued \cite{amanatidis2021maximum,garg2023computing}, or restricted additive \cite{camacho2021beyond,akrami2022ef2x} valuations. 

$\PMMS$ allocations are known to exist under identical, additive valuations; see, e.g.,~\cite{dai2022exact}. Further, a $\frac{4}{5}$-approximately $\PMMS$ allocation can be computed when the agents agree on the ordinal ranking of the goods and have additive valuations \cite{dai2022exact}. 

Towards relaxations of $\EFx$, recent works have also addressed $\EFx$ allocations with charity. The objective here is to achieve $\EFx$ by assigning a subset of goods and (possibly) leaving some goods unallocated as charity. Chaudhury et al. \cite{chaudhury2021little} proved the existence of a partial $\EFx$ allocation that leaves at most $(n-1)$ goods unassigned; here $n$ denotes the number of agents in the fair division instance.  This bound was later improved to $(n-2)$ goods by Berger et al.~\cite{berger2022almost}. It was also shown in \cite{berger2022almost} that when the number of agents is four, an $\EFx$ allocation can always be obtained by discarding at most one good. 

The work of Caragiannis et al.~\cite{caragiannis2019envy} focused on the economic efficiency of $\EFx$ allocations. They showed the existence of partial $\EFx$ allocations with at least half the optimal Nash social welfare. This was further generalized in the work of Feldman et al.~\cite{feldman2023optimal}, establishing the existence of partial allocations that are simultaneously $\alpha$-$\EFx$ and guarantee a $1/(\alpha+1)$ fraction of the maximum Nash welfare.

Another line of work combines $(1-\varepsilon)$-approximation guarantees with the construct of charity. The work of Chaudhury et al.~\cite{chaudhury2021improving} introduced the notion of rainbow cycle number, $R(d)$, and proved that any appropriate bound on $R(d)$ translates to a sublinear upper bound on the number of unallocated goods. In particular, Chaudhury et al. \cite{chaudhury2021improving} showed that $R(d)\in O(d^4)$, implying an upper bound of $O_{\epsilon}(n^{4/5})$ on the number of unassigned goods. This was subsequently improved to $O_{\epsilon}(n^{2/3})$ \cite{akrami2022efx,berendsohn2022fixed} and recently to $O_{\epsilon}(\sqrt{n\log n})$ \cite{chashm2022rainbow,akrami2023efx}.
\section{Notation and Preliminaries}
\label{section:notation} 

We study the problem of fairly allocating $m \in \mathbb{Z}_+$ indivisible goods among $n \in \mathbb{Z}_+$ agents. The cardinal preferences of the agents $i \in [n]$, over the goods, are specified by valuation functions $v_i: 2^{[m]} \mapsto \mathbb{R}_{\geq 0}$. In particular, $v_i(S)$ denotes the value that agent $i \in [n]$ has for any subset of goods $S \subseteq [m]$. Hence, denoting $[m]=\{1,\ldots, m\}$ as the set of goods and $[n]=\{1,\ldots,n\}$ as the set of agents, a discrete fair division instance is a tuple $\langle [n],[m],\{v_i\}_{i=1}^n \rangle$. 

We will, throughout, consider additive valuations: $v_i(S) = \sum_{g \in S}  v_i(\{g\})$, for any agent $i \in [n]$ and all subsets of goods $S \subseteq [m]$. We will write the value of any good $g \in [m]$ for agent $i \in [n]$ as $v_i(g)$ and write $P_i$ to denote the set of goods positively valued by agent $i$, i.e., $P_i \coloneqq \{ g \in [m] \mid v_i(g) > 0 \}$. We will assume, without loss of generality that for each agent $i$, the set $P_i$ is nonempty -- for an agent $i$ with $P_i = \emptyset$ every allocation is fair and, hence, such an agent can be disregarded without impacting any of our fair division algorithms. Along similar lines, we will assume, throughout, that for each good $g$, there exists at least one agent with a nonzero value for $g$. In addition, to simplify notation, for any subset $S \subseteq [m]$ and good $g \in [m]$, we write $S +g$ to denote $S \cup \{g\}$ and $S - g$ to denote $S \setminus \{g\}$. \\

\noindent
{\bf Range Parameter.} Our approximation guarantees---associated with the fairness notions defined below---are in terms of an instance-dependent range parameter $\gamma \in (0,1]$. This parameter essentially upper bounds, for each good $g$, the multiplicative range of \emph{nonzero} values for $g$ across the agents. Formally, for a given instance $\langle [n],[m],\{v_i\}_{i=1}^n \rangle$ and each good $g \in [m]$, we define the range parameter $\gamma_g$ as ratio of the smallest nonzero and largest value for $g$,\footnote{If all the agents have zero value for a good $g$, i.e., if $\max_{i} \ v_i(g)=0$, then, by convention, we set $\gamma_g = 1$.} i.e., 
\begin{align}
\gamma_g \coloneqq  \frac{\min_{j: v_j(g)>0} \ v_j(g)}{\max_{i} \ v_i(g)} \label{ineq:gamma-g}
\end{align}
Furthermore, the range parameter for the instance is defined as $\gamma \coloneqq \min_{g \in [m]} \ \gamma_g$. 

The parameter $\gamma$ bounds the range for nonzero values for each good individually: for each good $g \in [m]$ and for any pair of agents $i, j \in [n]$, with a nonzero value for $g$, we have $v_i(g) \geq  {\gamma} \ v_j(g)$. Therefore, for each good $g$, we can define a \emph{base value} $ \overline{v}(g) \coloneqq  {\sqrt{\gamma_g}} \ \max_{i \in [n]} \ v_i(g)$ and note the following useful property:\footnote{Equivalently, one can set the base value $\overline{v}(g)$ as the geometric mean of the maximum and the minimum nonzero values for $g$, i.e., $\overline{v}(g) = \sqrt{\min_{j: v_j(g)>0} \ v_j(g) \cdot \max_{i} \ v_i(g) } $.}  For every agent $i\in [n]$ and each good $g \in [m]$, either $v_i(g)=0$ or 
\begin{align}
v_i(g) & \in \left[ \sqrt{\gamma}  \overline{v}(g), \ \frac{1}{\sqrt{\gamma}} \overline{v}(g) \right] \label{ineq:range}
\end{align}

Indeed, by the definitions of the base value $\overline{v}(g)$ and $\gamma$, we have $\max_{i \in [n]} v_i(g) = \frac{1}{\sqrt{\gamma_g}} \overline{v}(g) \leq \frac{1}{\sqrt{\gamma}} \overline{v}(g)$. Furthermore, 
\begin{align*}
\min_{j: v_j(g)>0} \ v_j(g) = {\gamma_g} \ \max_{i \in [n]} v_i(g) = \sqrt{\gamma_g} \ \overline{v}(g) \geq \sqrt{\gamma} \  \overline{v}(g).
\end{align*}

Observe that, when $\gamma=1$, the value $v_i(g) \in\{0, \overline{v}(g)\}$, for each agent $i \in [n]$. That is, $\gamma =1$ corresponds to the well-studied class of restricted additive valuations. Also, for any collection of additive valuations, the parameter $\gamma >0$. \\

\noindent 
{\bf Allocations.}
In discrete fair division, an allocation $\mathcal{A} =(A_1, A_2, \ldots, A_n)$ refers to an $n$-tuple of pairwise disjoint subsets $A_1, A_2, \ldots, A_n \subseteq [m]$ (i.e., $A_i \cap A_j = \emptyset$, for all $i \neq j$). Here, each subset $A_i$ is assigned to agent $i \in [n]$ and is referred to as a bundle. An allocation can be partial, i.e., $\cup_{i=1}^n A_i \subsetneq [m]$. Hence, for disambiguation, we will use
the term allocation to denote partitions in which all the goods have been assigned and, otherwise, use the term partial allocation. \\

\noindent
{\bf Fairness Notions.} We next define the fairness constructs considered in this work.  
\begin{definition}[$\alpha$-$\EFx$] \label{defn:Efx}
For parameter $\alpha \in (0,1)$ and fair division instance $\langle [n], [m], \{v_i\}_i\rangle$, an allocation $\mathcal{A}=(A_1,\ldots, A_n)$ is said to be $\alpha$-approximately envy-free up to the removal of any positively-valued good ($\alpha$-$\EFx$) iff, for every pair of agents $i,j\in[n]$ and for every good $g \in A_j \cap P_i$ (i.e., for every good in agent $j$'s bundle that is positively valued by $i$), we have $v_i(A_i)\geq \alpha \ v_i(A_j - g)$.  
\end{definition}

\begin{definition}[$\alpha$-$\tEFx$] \label{defn:tEfx} For parameter $\alpha \in (0,1)$ and instance $\langle [n], [m], \{v_i\}_i\rangle$, an allocation $\mathcal{A}=(A_1,\ldots, A_n)$ is said to be $\alpha$-approximately envy-free up to the transfer of any positively-valued good ($\alpha$-$\tEFx$) iff, for every pair of agents $i,j\in[n]$ and for every good $g \in A_j \cap P_i$ (i.e., for every good in agent $j$'s bundle that is positively valued by $i$), we have $v_i(A_i+g) \geq \alpha \ v_i(A_j - g)$. 
\end{definition}
Setting $\alpha=1$ in Definitions \ref{defn:Efx} and \ref{defn:tEfx}, respectively, gives us the notions of exact $\EFx$ and exact $\tEFx$ allocations. Note that, while $\EFx$ considers the elimination of envy between pairs of agents via the hypothetical removal of any good from the other agent's bundle, $\tEFx$ allows for a hypothetical transfer of any good from the other agent's bundle to eliminate envy. Also, for notational convenience, we will follow the convention that if parameter $\beta >1$, then a $\beta$-$\tEFx$ allocation refers to an exact  $\tEFx$ allocation. 

We also address a share-based notion of fairness. In particular, for any agent $i \in [n]$ and subset of goods $S \subseteq [m]$, we define the $1$-out-of-$2$ maximin share, $\mu_i^{(2)} (S)$, as the maximum value that agent $i$ can guarantee for herself by proposing a partition of $S$ into two subsets---$X$ and $S\setminus X$---and then receiving the minimum valued one between the two,  
$\mu_i^{(2)} (S) \coloneqq \max_{X \subseteq S} \ \min\{ v_i(X), v_i( S \setminus X) \}$.

An allocation $(A_1, \ldots, A_n)$ is said to be fair with respect to pairwise maximin shares ($\PMMS$) iff every agent $i \in [n]$ receives at least her $1$-out-of-$2$ maximin share when considering $A_i$ and the bundle of any other agent $j$, i.e., $v_i(A_i) \geq \mu_i^{(2)} \left(A_i \cup A_j \right) $, for all $j \neq i$. Approximately PMMS-fair allocations are defined as follows. 

\begin{definition}[$\alpha$-$\PMMS$]
For parameter $\alpha \in (0,1)$ and instance $\langle [n], [m], \{v_i\}_i\rangle$, an allocation $\mathcal{A}=(A_1,\ldots, A_n)$ is said to be $\alpha$-$\PMMS$ iff for every pair of agents $i \neq j$, we have $v_i(A_i) \geq \alpha \ \mu_i^{(2)}(A_i\cup A_j)$. \\
\end{definition}

\begin{remark}
Note that an $\alpha$-$\EFx$ allocation (or an $\alpha$-$\PMMS$ allocation) continues to be $\alpha$-approximately fair even if one scales the agents' valuations (i.e., sets $v_i(g) \leftarrow s_i \ v_i(g)$, for all agents $i$ and goods $g$, with agent-specific factors $s_i >0$). By contrast, such a heterogeneous scaling can change the parameter $\gamma$. This, however, is not a limitation, since, for any given instance, we can efficiently find scaling factors that induce the maximum possible range parameter; see Appendix \ref{appendix:optimal-scaling}. One can view such a scaling as a preprocessing step (executed before our algorithms) which can potentially improve $\gamma$.
\end{remark}
\section{Finding Approximately $\EFx$ Allocations}
\label{section:EFx}

This section develops an algorithm that computes a $\left( \frac{2\gamma}{\sqrt{5+4\gamma}-1}\right)$-$\EFx$ allocation in polynomial-time, for any given fair division instance with range parameter $\gamma \in (0,1]$.  This algorithmic result also implies the guaranteed existence of such approximately $\EFx$ allocations.  

Our algorithm is based on an extension of the envy-cycle-elimination method of~\cite{lipton2004approximately}. Analogous to Lipton et al.~\cite{lipton2004approximately}, we use the 
construct of an envy-graph (Definition \ref{defn:envy-graph}); however, while considering these graphs we restrict attention to certain subsets of agents. Section \ref{section:envy-cycle} provides the  definitions and results for envy-cycle elimination. Section \ref{section:efx-alg} details our algorithm. 

We note that, via a somewhat simpler algorithm, one can efficiently find $\gamma$-$\EFx$ allocations. However, the current work provides a stronger guarantee, since $\frac{2\gamma}{\sqrt{5+4\gamma}-1} > \gamma$, for all $\gamma \in (0,1)$. This strengthening  highlights the scope of improvements for the $\EFx$ criterion and brings out interesting technical features of the envy-cycle elimination algorithm.

\subsection{Envy-Cycle Elimination}
\label{section:envy-cycle}
\begin{definition}[Envy graph] \label{defn:envy-graph}
For any (partial) allocation $\alloc=(A_1,\ldots, A_n)$ and any subset of agents $N\subseteq [n]$, the envy graph $\mathcal{G}(N, \alloc)$ is a directed graph on $|N|$ vertices. Here, the vertices represent the agents in $N$ and a directed edge from vertex $i \in N$ to vertex $j \in N$ is included in $\mathcal{G}(N, \alloc)$ iff $v_i(A_i) < v_i(A_j)$. 
\end{definition}

One can always ensure---by reassigning the bundles among the agents---that the envy-graph is acyclic. In particular, if $\mathcal{G}(N, \alloc)$ contains a cycle $i_1 \to i_2 \to \ldots \to i_k \to i_1$, then we can resolve the cycle by reassigning the bundles as follows: $A_{i_t} = A_{i_{t+1}}$ for all $1 \leq t < k$ and $A_{i_k} = A_{i_1}$. Note that, here, every agent in the cycle receives a higher-valued bundle, and for the agents not in the cycle, the assigned bundles remain unchanged. Hence, such a reassignment of bundles ensures that envy edges either shift or are eliminated, and no new envy edges are formed. Overall, a polynomial number of such reassignments (cycle eliminations) yield an acyclic envy graph. This observation is formally stated in Lemma \ref{lemma:cycle-eliminate}; the proof of the lemma is standard and, hence, omitted. 
\begin{lemma}
\label{lemma:cycle-eliminate}
Given any partial allocation $\alloc = (A_1, \ldots, A_n)$ and any subset of agents $N \subseteq [n]$, we can reassign the bundles $A_i$-s among the agents $i \in N$ and find, in polynomial time, another partial allocation $\mathcal{B} = (B_1, \ldots, B_n)$ with the properties that (i) the envy-graph $\mathcal{G}(N, \mathcal{B})$ is acyclic and (ii) the values $v_i(B_i) \geq v_i(A_i)$, for all agents $i \in [n]$. 
\end{lemma}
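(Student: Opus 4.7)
\textbf{Proof plan for Lemma \ref{lemma:cycle-eliminate}.}
The plan is to repeatedly \emph{rotate} bundles along directed cycles of the envy graph restricted to $N$ until no such cycle remains. Initialize $\mathcal{B} = \mathcal{A}$. While $\mathcal{G}(N, \mathcal{B})$ contains a directed cycle $i_1 \to i_2 \to \cdots \to i_k \to i_1$, form the updated allocation by setting $B_{i_t} \leftarrow B_{i_{t+1}}$ for $1 \leq t < k$ and $B_{i_k} \leftarrow B_{i_1}$ (using the pre-update bundles on the right-hand sides), while leaving $B_j$ unchanged for every $j \notin \{i_1, \ldots, i_k\}$; in particular, bundles of agents outside $N$ are never touched. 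By the defining condition of an envy edge, each agent $i_t$ in the cycle strictly prefers the bundle $B_{i_{t+1}}$ to her current $B_{i_t}$, so the value $v_{i_t}(B_{i_t})$ strictly increases; for agents not on the cycle, $v_i(B_i)$ is unchanged. Consequently, the invariant $v_i(B_i) \geq v_i(A_i)$ for all $i \in [n]$ is preserved throughout, giving condition~(ii) of the lemma statement at termination. Property (i), acyclicity of $\mathcal{G}(N, \mathcal{B})$ upon termination, is immediate from the stopping rule.

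The substantive step is to show that the loop terminates in polynomially many iterations. I would use the following potential. Note that each rotation only \emph{permutes} the multiset of bundles currently held by agents in $N$; in particular, the set $\mathcal{F} \coloneqq \{B_j : j \in N\}$ (as a multiset) is invariant across iterations. For each agent $i \in N$, fix a consistent tie-breaking rule and let $r_i(S)$ denote the rank of a bundle $S \in \mathcal{F}$ in the ordering of $\mathcal{F}$ by $v_i$ (so $r_i(S) \in \{1, 2, \ldots, |N|\}$ with larger rank corresponding to larger $v_i$-value). Define
\[
\Phi(\mathcal{B}) \coloneqq \sum_{i \in N} r_i(B_i).
\]
Because rotation moves each agent $i_t$ on the cycle to a strictly $v_{i_t}$-preferred bundle drawn from the same multiset $\mathcal{F}$, the rank $r_{i_t}(B_{i_t})$ strictly increases, while $r_i(B_i)$ is unchanged for $i$ off the cycle. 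Since every cycle has length at least $2$, each iteration increases $\Phi$ by at least $2$. As $\Phi$ is an integer bounded between $|N|$ and $|N|^2$, the number of rotations is at most $|N|^2/2 = O(n^2)$. Each iteration reduces to finding a directed cycle in an $n$-vertex graph and performing a cyclic shift, both of which run in $O(n^2)$ time; the overall running time is therefore polynomial.

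The main obstacle is exactly the polynomial termination bound: a naive argument using $\sum_i v_i(B_i)$ as the potential only shows monotone improvement, not a polynomial iteration count, because the values can live in an unbounded numerical range. The key insight that unlocks the bound is the invariance of the bundle-multiset $\mathcal{F}$, which lets one replace numerical values with integer ranks and obtain a combinatorial potential of quadratic size. Once termination is in hand, the remaining claims — acyclicity of $\mathcal{G}(N, \mathcal{B})$ and the weak-monotonicity guarantee $v_i(B_i) \geq v_i(A_i)$ for all $i \in [n]$ — follow directly from the stopping rule and the pointwise monotonicity established in the first paragraph.
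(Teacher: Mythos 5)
Your proof is correct and follows the same cycle-rotation procedure that the paper sketches in the paragraph preceding the lemma (and then omits as standard); the only cosmetic difference is that you certify polynomial termination via a rank-sum potential over the invariant multiset of bundles, whereas the usual bookkeeping counts envy edges, which strictly decrease with each cycle resolution. Both potentials work, and the monotonicity and acyclicity claims are handled correctly.
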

It is relevant to note that any acyclic envy-graph $\mathcal{G}(N, \mathcal{B})$ necessarily admits a source vertex $s \in N$. Such an agent $s \in N$ is not envied by any agent $i \in N$, under the current partial allocation $\mathcal{B}$. In summary, Lemma \ref{lemma:cycle-eliminate} implies that we can always update a partial allocation (by reassigning the bundles) such that the agents' valuations do not decrease and we find an unenvied agent $s$.  

Utilizing this envy-cycle elimination framework, the algorithm of Lipton et al.~\cite{lipton2004approximately} achieves the fairness notion of envy-freeness up to the removal of one good ($\EFone$). In particular, this $\EFone$ algorithm starts with empty bundles ($A_i = \emptyset$ for all agents $i \in [n]$) and iteratively assigns the goods. During each iteration, the algorithm updates the current partial allocation $\mathcal{A}$ to ensure that the envy-graph $\mathcal{G}([n], \mathcal{A})$ is acyclic and, hence, identifies an unenvied agent $s \in [n]$. To maintain $\EFone$ as an invariant, it suffices to allocate \emph{any} unassigned good to the agent $s$.    	

Our algorithm also follows the envy-cycle elimination framework, however it allocates the goods in a  judicious order. In particular, for each agent, the first assigned good is intricately selected. The next section details our algorithms and establishes the main $\EFx$ result of this work. 

\subsection{Look-Ahead Assignment Guided by Base Values}
\label{section:efx-alg}

\begin{algorithm}

\textbf{Input:} A fair division instance $\mathcal{I}=\langle [n],[m],\{v_i\}_i \rangle$ with range parameter $\gamma$. \\
\textbf{Output:} A complete allocation $\alloc=(A_1,\ldots,A_n)$.  
\begin{algorithmic}[1]
\STATE Initialize $A_i = \emptyset$, for each agent $i\in [n]$. Set parameter $\eta \coloneqq \frac{\sqrt{5+4\gamma}-1}{2} \in (0,1]$. 
\STATE Also, write $U = [m]$ to denote the set of unassigned goods and $Z = [n]$ to denote the set of agents with an empty bundle. 
\WHILE{$U \neq \emptyset$}
    \STATE Select good $\widehat{g} \in \argmax_{h \in U} \ \overline{v}(h)$. \label{line:selectHatg}
    \IF{there exists an agent $i \in Z$ such that $v_i(\widehat{g})>0$} \label{Line:If}
    \STATE Define set $U_\eta \coloneqq \left\{ g \in U \mid \overline{v}(g)\geq \eta \ \max_{h\in U}\overline{v}(h) \right\}$. \label{line:Ueta}
    \STATE Select a good $f_i \in \argmax_{h\in U_\eta} v_i(h)$ and set $A_i = \{ f_i \}$. Update $Z \gets Z -  i$, and $U \gets U - f_i$. \label{line:if-block}
    \ELSIF{$v_i(\widehat{g})=0$ for all agents $i \in Z$} \label{line:else-if}
    \STATE Consider the envy-graph $\mathcal{G}([n]\backslash Z, \alloc)$ and resolve envy-cycles in the graph, if any (Lemma \ref{lemma:cycle-eliminate}).  \label{line:else-if-block}
    \STATE Set  $s \in [n] \setminus Z$ to be a source vertex in $\mathcal{G}([n]\backslash Z, \alloc)$. Update $A_s \gets A_s + \widehat{g}$ and $U \gets U - \widehat{g}$. \label{line:else-if-block2}
    \ENDIF
\ENDWHILE
\RETURN allocation $\alloc=(A_1,\ldots,A_n)$.
\end{algorithmic}
\caption{\textsc{LaBase}: Look-ahead assignment guided by base values} 
\label{alg:EFx}
\end{algorithm}

Our algorithm, \textsc{LaBase} (Algorithm \ref{alg:EFx}), starts with empty bundles ($A_i = \emptyset$ for all $i$), initializes $Z =[n]$ as the set of agents that have zero goods, and then iteratively assigns the goods. In each iteration, \textsc{LaBase} considers an unassigned good with highest base value.\footnote{Recall that, for an instance with range parameter $\gamma$, the base value of a good $g$ is defined as $\overline{v}(g) \coloneqq  \sqrt{\gamma_g} \ \max_i v_i(g)$.} That is, the algorithm considers the good $\widehat{g} \in \argmax_{h \in U} \ \overline{v} (h)$; throughout the algorithm's execution, $U$ denotes the set of (currently) unassigned goods. The algorithm also considers---in relevant cases and for a chosen parameter $\eta \in (0,1]$---a look-ahead set $U_\eta \coloneqq \left\{g \in U \mid \overline{v}(g) \geq \eta \ \max_{h\in U} \ \overline{v}(h) \right\}$. Note that $U_\eta$ consists of all goods with relatively large base values among the current set of unassigned goods. 

The algorithm essentially ensures that the first good assigned to each agent $i$ is of relatively high value under $v_i$ and, at the same time, this good also has a sufficiently high base value. To achieve these two (somewhat complementary) requirements together, the algorithm first checks if there exists an agent $i$ that currently has an empty bundle (i.e., $i \in Z$) and positively values the good $\widehat{g}$ (see Line \ref{Line:If} in Algorithm \ref{alg:EFx}). Such an agent $i$ is allowed to select its most preferred good from the set $U_\eta$, i.e., agent $i$ receives as its first good $f_i \in \argmax_{ h \in U_\eta} v_i(h)$; note that $f_i$ is selected considering the agent $i$'s valuation $v_i$ and not the base values. Lemma \ref{lemma:first-good} below shows that the two stated (complementary) requirements are satisfied via this first-good selection. 

Otherwise, if all the agents with empty bundles (i.e., all agents $i \in Z$) have zero value for the good $\widehat{g}$, then we restrict attention to the remaining agents $[n] \setminus Z$ and assign $\widehat{g}$ following the envy-cycle elimination framework: we ensure that---for the current partial allocation $\alloc$---the envy-graph $\mathcal{G}([n] \setminus Z, \alloc)$ is acyclic, identify a source $s \in  [n] \setminus Z$ in the graph, and assign $\widehat{g}$ to agent $s$. 

One can directly verify that \textsc{LaBase} (Algorithm \ref{alg:EFx}) terminates in polynomial time. The remainder of the section establishes the $\EFx$ guarantee achieved by this algorithm. 
\begin{restatable}{lemma}{LemmaEFxZ}
\label{lem:i_in_Z}
 At the end of any iteration of Algorithm \ref{alg:EFx}, let $\alloc=(A_1,\ldots,A_n)$ be the partial allocation among the agents and let $i$ be any agent contained in the set $Z$. Then, for all $j \in [n]$, the bundle $A_j$ contains at most one good that is positively valued good by $i$. That is, the $\EFx$ condition holds for agent $i \in Z$ against all other agents $j \in [n]$.
\end{restatable}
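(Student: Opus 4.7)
The plan is a direct audit of Algorithm \ref{alg:EFx}, tracing the provenance of each good currently in $A_j$ and then combining this with the branching structure of the while loop. I fix an iteration, an agent $i \in Z$ at the end of it, and an agent $j \in [n]$, and aim to show $|A_j \cap P_i| \leq 1$.

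My first observation will be a simple monotonicity property of $Z$: agents only ever leave $Z$ (in Line \ref{line:if-block}, when they receive their first good) and never re-enter it. Hence, since $i \in Z$ at the current iteration, $i$ has been in $Z$ throughout \emph{every} prior iteration, including each iteration in which some good was added to $A_j$.

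Next I would partition the goods presently in $A_j$ by the branch in which each was assigned. Because $A_j$ started empty and an agent leaves $Z$ only after receiving its first good, the first good placed in $A_j$, call it $f_j$, must have been assigned in the \textbf{if} branch (Line \ref{line:if-block}). Every subsequent good in $A_j$ must therefore have been assigned in the \textbf{else if} branch (Line \ref{line:else-if-block2}). The crux is the activation condition on Line \ref{line:else-if}: whenever the algorithm takes the \textbf{else if} branch, the good $\widehat{g}$ being placed satisfies $v_{i'}(\widehat{g}) = 0$ for every $i' \in Z$ at that moment. Since $i$ was in $Z$ at every such moment, $v_i(\widehat{g}) = 0$, so $\widehat{g} \notin P_i$. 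Consequently, no good in $A_j$ other than possibly $f_j$ can lie in $P_i$, yielding $|A_j \cap P_i| \leq 1$.

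The $\EFx$ conclusion is then immediate: $v_i(A_i) = v_i(\emptyset) = 0$, and if a (unique) good $g \in A_j \cap P_i$ exists, then by additivity every remaining good in $A_j - g$ contributes zero to $v_i$, so $v_i(A_j - g) = 0$. Thus $v_i(A_i) \geq v_i(A_j - g)$ holds exactly, not merely approximately. I do not anticipate a genuine obstacle here; the lemma is essentially a bookkeeping statement that isolates the easy half of the $\EFx$ argument, where an agent with an empty bundle is shielded by the very condition that gates the \textbf{else if} branch. The substantive difficulty, to be handled in a later lemma, will be controlling envy for agents that have already left $Z$ and whose bundles continue to grow.
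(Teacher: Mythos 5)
Your proof is correct and takes essentially the same approach as the paper's: both rest on the monotonicity of $Z$ together with the gate condition of the \textbf{else if} branch, which forces every good added to an already non-empty bundle to have zero value for any agent still in $Z$. The paper phrases this as a contradiction (two goods of $A_j\cap P_i$ cannot both exist, since the later one would have been assigned through the else-if block), whereas you argue it directly by provenance of the goods in $A_j$; the difference is purely presentational.
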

\begin{proof}
For agent $i \in Z$, consider the set of positively-valued goods $P_i = \left\{g\in [m] \mid v_i(g)>0 \right\}$. Assume, towards a contradiction, that there exists an agent $j \in [n]$ such that $|A_j \cap P_i| \geq 2$. Let $g_1$ and $g_2$ be two distinct goods in $A_j \cap P_i$, indexed such that $g_1$ was allocated in an earlier iteration than $g_2$. Consider the iteration in which good $g_2$ is assigned and note that (by construction of the algorithm) when the good $g_2$ is assigned, the receiving bundle already contains the good $g_1$. That is, good $g_2$ is assigned under the else-if block (in Line \ref{line:else-if-block} as good $\widehat{g} = g_2$) of the algorithm. Therefore, for the good $g_2$ the else-if condition (Line \ref{line:else-if}) must have held: during the iteration in which $g_2$ is assigned, we have $v_a(g_2) = 0$ for all $a \in Z$. Since $Z$ is a monotonically decreasing set, agent $i$ must have been contained in $Z$ during $g_2$'s assignment. This, however, leads to a contradiction since $v_i(g_2) > 0$ (given that 
$g_2 \in P_i$). Hence, by way of contradiction, we obtain that, during any iteration in which $i \in Z$, for the maintained partial allocation $\alloc=(A_1,\ldots,A_n)$ we have $|A_j \cap P_i| \leq 1$ for all $j \in [n]$.

Therefore, the $\EFx$ condition (see Definition \ref{defn:Efx}) holds for agent $i$. The lemma stands proved. 
\end{proof}
The following lemma establishes a key property of the first good assigned to each agent. Recall that, for any given instance with range parameter $\gamma$, the algorithm sets $\eta \coloneqq \frac{\sqrt{5+4\gamma}-1}{2}$; this choice of $\eta$ is guided by an optimization consideration which appears in the analysis below. 
\begin{lemma}
\label{lemma:first-good}
Consider any iteration of the algorithm wherein the if-condition (Line \ref{Line:If}) executes, and let $i\in [n]$ be the agent that receives the good $f_i$ in that iteration. Then, 
\begin{align*}
v_i(f_i) & \geq \frac{\gamma}{\eta} v_i(g) \qquad \text{for all } g \in U  \text{ and } \\
\overline{v}(f_i) &  \geq \eta \ \overline{v}(g) \qquad \text{for all } g \in U.
\end{align*} 
Here, $U$ is the set of unassigned goods during that iteration.  
\end{lemma}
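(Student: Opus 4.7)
The plan is to handle the two inequalities separately, with the second being immediate and the first splitting into a case analysis on whether $g$ lies inside or outside the look-ahead set $U_\eta$.

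The second inequality is direct. By construction in Line~\ref{line:if-block}, $f_i \in U_\eta$, so the defining condition from Line~\ref{line:Ueta} gives $\overline{v}(f_i) \geq \eta \max_{h \in U} \overline{v}(h) \geq \eta \, \overline{v}(g)$ for every $g \in U$.

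For the first inequality, I would split on whether $g$ belongs to $U_\eta$. If $g \in U_\eta$, then $f_i \in \argmax_{h \in U_\eta} v_i(h)$ already yields $v_i(f_i) \geq v_i(g)$, so it suffices to check $\eta \geq \gamma$. This is a short calculation from the choice $\eta = (\sqrt{5+4\gamma}-1)/2$: the equivalent inequality $\sqrt{5+4\gamma} \geq 2\gamma + 1$ squares to $\gamma^2 \leq 1$, which holds for $\gamma \in (0,1]$. Hence $\gamma/\eta \leq 1$ and the claim follows in this case.

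If instead $g \in U \setminus U_\eta$, the crucial leverage is that the if-condition on Line~\ref{Line:If} ensures $v_i(\widehat{g}) > 0$, which via the range bound (\ref{ineq:range}) gives the lower bound $v_i(\widehat{g}) \geq \sqrt{\gamma} \, \overline{v}(\widehat{g})$. Since $\widehat{g}$ trivially sits in $U_\eta$ (as it maximizes $\overline{v}$ over $U$) and $f_i$ maximizes $v_i$ on $U_\eta$, I get $v_i(f_i) \geq v_i(\widehat{g}) \geq \sqrt{\gamma} \, \overline{v}(\widehat{g})$. To upper bound $v_i(g)$, I again invoke (\ref{ineq:range}) to obtain $v_i(g) \leq \overline{v}(g)/\sqrt{\gamma}$ (the case $v_i(g)=0$ is trivial), and then use $g \notin U_\eta$ to get $\overline{v}(g) < \eta \, \overline{v}(\widehat{g})$. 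Chaining these bounds produces $v_i(f_i)/v_i(g) \geq \gamma/\eta$, as required.

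The only subtle points to flag are (i) recording that $\widehat{g} \in U_\eta$, which is what legitimately lets me compare $v_i(f_i)$ to $v_i(\widehat{g})$, and (ii) using the if-condition to justify applying (\ref{ineq:range}) to $\widehat{g}$; neither is a real obstacle but together they are what glue the two cases. I do not expect the precise value of $\eta = (\sqrt{5+4\gamma}-1)/2$ to play a role in this lemma beyond the derived inequality $\eta \geq \gamma$; the finer optimization that fixes $\eta$ to this exact value will surely surface in the downstream $\EFx$ analysis rather than here.
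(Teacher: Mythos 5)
Your proposal is correct and follows essentially the same route as the paper's proof: the second inequality from $f_i \in U_\eta$, and the first via the same case split on $g \in U_\eta$ versus $g \in U \setminus U_\eta$, using $\eta \geq \gamma$ in the first case and the chain $v_i(f_i) \geq v_i(\widehat{g}) \geq \sqrt{\gamma}\,\overline{v}(\widehat{g}) > \frac{\sqrt{\gamma}}{\eta}\overline{v}(g) \geq \frac{\gamma}{\eta}v_i(g)$ in the second. Your explicit verification that $\eta \geq \gamma$ reduces to $\gamma^2 \leq 1$ is a small addition the paper leaves implicit.
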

\begin{proof}
The second part of the lemma (i.e., $ \overline{v}(f_i)  \geq \eta \ \overline{v}(g)$ for all $g \in U$) follows directly from the construction of $U_\eta$ (Line \ref{line:Ueta}) and the fact that $f_i \in U_\eta$.

To establish the first part---i.e., $v_i(f_i)  \geq \frac{\gamma}{\eta} v_i(g)$ for all $g \in U$---we consider two complementary cases: either $g \in U_\eta$ or $g \in U \setminus U_\eta$. \\

\noindent
{\it Case {\rm 1}: $g \in U_\eta$.} In this case, we have $v_i(f_i) \geq v_i(g)$, since $f_i \in \argmax_{h\in U_\eta} v_i(h)$. Note that, for any $\gamma \in (0,1]$, the parameter $\eta =  \frac{\sqrt{5+4\gamma}-1}{2} \geq \gamma$. Hence, the desired inequality $v_i(f_i)  \geq \frac{\gamma}{\eta} v_i(g)$ holds for all goods $g \in U_\eta$. \\

\noindent
{\it Case {\rm 2}: $g \in U \setminus U_\eta$.} For such a good $g$, the definition of $U_\eta$ gives us $ \overline{v}(g) < \eta \ \overline{v}(\widehat{g})$. In addition, we have $v_i(\widehat{g}) > 0$ for the current good $\widehat{g} \in \argmax_{g \in U} \ \overline{v}(g)$, since the if-condition (Line \ref{Line:If}) holds. Furthermore, 
\begin{align}
v_i(f_i) & \geq v_i(\widehat{g}) \tag{since $f_i \in \argmax_{h\in U_\eta} v_i(h)$ and $\widehat{g} \in U_\eta$} \\
& \geq \sqrt{\gamma} \ \overline{v}( \widehat{g}) \tag{via (\ref{ineq:range}) and $v_i(\widehat{g}) > 0$} \\
& \geq \frac{\sqrt{\gamma}}{\eta} \ \overline{v}(g)  \tag{since $\overline{v}(g) < \eta \ \overline{v}(\widehat{g})$} \\
& \geq \frac{\gamma}{\eta} v_i(g). \tag{via (\ref{ineq:range})}\nonumber 
\end{align}
Note that the last inequality $\overline{v}(g) \geq \sqrt{\gamma} v_i(g)$ holds even if $v_i(g) = 0$. Therefore, we obtain the stated inequality, $v_i(f_i)  \geq \frac{\gamma}{\eta} v_i(g)$, for all goods $g \in U \setminus U_\eta$ as well. 

The lemma stands proved. 
\end{proof}

The lemma below complements Lemma \ref{lemma:first-good} by addressing the else-if condition in Algorithm \ref{alg:EFx}.
\begin{lemma} \label{lemma:else-if-bar}
Consider any iteration of the algorithm in which the else-if condition (Line \ref{line:else-if}) executes, and let $s\in [n]$ be the agent that receives the good $\widehat{g}$ in that iteration. Then,
\begin{align*}
\overline{v}(\widehat{g}) \geq \overline{v}(h), \hspace{3em} \text{for all } h\in U.
\end{align*} Here, $U$ is the set of unassigned goods during that iteration. \label{lem:type2}
\end{lemma}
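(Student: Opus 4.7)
The plan is to observe that Lemma~\ref{lemma:else-if-bar} is essentially immediate from the selection rule of $\widehat{g}$ in Line~\ref{line:selectHatg} of Algorithm~\ref{alg:EFx}, and the only work is to verify that the relevant snapshot of $U$ is the right one.

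First, I would recall that at the start of every iteration of the \textbf{while} loop, the algorithm selects the good
\[
\widehat{g} \in \argmax_{h\in U} \ \overline{v}(h),
\]
where $U$ is the current set of unassigned goods. By definition of the argmax, this immediately yields $\overline{v}(\widehat{g}) \geq \overline{v}(h)$ for every $h \in U$, which is exactly the inequality claimed by the lemma.

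Next, I would check that there is no mismatch between the set $U$ used when defining $\widehat{g}$ and the set $U$ referenced in the lemma statement. The lemma conditions on the else-if branch (Line~\ref{line:else-if}) being executed; in that branch, $U$ is only modified on Line~\ref{line:else-if-block2} via $U \gets U - \widehat{g}$, which happens after $\widehat{g}$ has been chosen. Hence, at the moment $\widehat{g}$ is assigned to the source agent $s$, the set of unassigned goods is precisely the same set over which the argmax was taken (still containing $\widehat{g}$), so the maximality of $\overline{v}(\widehat{g})$ carries through to the statement of the lemma.

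I do not anticipate any technical obstacle: the lemma is a bookkeeping observation recorded here so that subsequent steps of the analysis (where base-value comparisons between $\widehat{g}$ and other unassigned goods are invoked under the else-if branch) can cite it directly without re-unwinding the selection rule each time. The full proof will therefore consist of little more than quoting Line~\ref{line:selectHatg} and noting that $\widehat{g} \in U$ at the point of assignment.
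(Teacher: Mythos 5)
Your proposal is correct and matches the paper's proof, which likewise derives the inequality directly from the selection rule $\widehat{g} \in \argmax_{h\in U} \overline{v}(h)$ in Line \ref{line:selectHatg}. The extra bookkeeping you do (checking that $U$ is unchanged between the selection of $\widehat{g}$ and its assignment) is a harmless and valid addition.
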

\begin{proof}
The lemma follows directly from the selection criterion for good $\widehat{g}$ in Line \ref{line:selectHatg} of the algorithm. 
\end{proof}

Note that, for each assigned bundle $A_s$, we can associate an order of inclusion with all the goods in $A_s$. The bundles are reassigned among the agents (in Line \ref{line:else-if-block}), but---bundle wise---the inclusion order remains well defined. In particular, for any bundle $A_s$ and any agent $i \in [n]$, we can index the goods in $A_s \cap P_i = \{g_1, g_2, \ldots, g_k\}$ such that the good $g_1$ was assigned before $g_2$ during the algorithm's execution, $g_2$ before $g_3$, and so on. The following corollaries consider different values of the count $k$ and establish useful value relations between these goods $g_1,\ldots, g_k$.

\begin{corollary}
\label{corollary:kone}
Consider any iteration of the algorithm in which the else-if condition (Line \ref{line:else-if}) executes, assigning good $\widehat{g}$ to agent $s\in [n]$. Let $\alloc = (A_1, \ldots, A_n)$ be the partial allocation among the agents at the end of the iteration. Then, for any agent $i \in [n]$ with $|A_s \cap P_i| \leq 1$, the $\EFx$ guarantee holds for agent $i$ against $s$. 
\end{corollary}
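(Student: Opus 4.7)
The plan is to observe that the hypothesis $|A_s \cap P_i| \leq 1$ essentially collapses the $\EFx$ requirement to a trivial inequality, because under additive valuations the value $v_i(A_s - g)$ reduces to the sum of $v_i$ over goods in $A_s - g$, none of which are positively valued by $i$ once the single good in $A_s \cap P_i$ is removed.

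Concretely, I would split into two cases based on the cardinality of $A_s \cap P_i$. In the first case, $A_s \cap P_i = \emptyset$: here Definition \ref{defn:Efx} quantifies over $g \in A_s \cap P_i$, so the $\EFx$ condition for $i$ against $s$ holds vacuously. In the second case, $A_s \cap P_i = \{g^*\}$ for a unique good $g^*$: the only instance of Definition \ref{defn:Efx} that needs to be verified is with this particular $g^*$. Since valuations are additive, I would write
\begin{align*}
v_i(A_s - g^*) \ = \ \sum_{h \in A_s \setminus \{g^*\}} v_i(h) \ = \ 0,
\end{align*}
where the last equality uses the fact that every good $h \in A_s \setminus \{g^*\}$ satisfies $h \notin P_i$ and hence $v_i(h) = 0$. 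The $\EFx$ inequality $v_i(A_i) \geq \alpha \cdot v_i(A_s - g^*)$ then holds trivially for any $\alpha \in (0,1]$ since its right-hand side is zero and $v_i(A_i) \geq 0$.

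There is no real obstacle here; the statement is a clean bookkeeping step that isolates the easy situation so that the subsequent cases of the main analysis (presumably $|A_s \cap P_i| = 2$ and $|A_s \cap P_i| \geq 3$, where Lemmas \ref{lemma:first-good} and \ref{lemma:else-if-bar} will be invoked to control the base-value gaps between $g^*$ and the other goods in $A_s$) can assume that $A_s$ contains at least two goods positively valued by $i$. The only subtlety worth stating explicitly in the write-up is the reliance on additivity to conclude $v_i(A_s - g^*) = 0$ from $|A_s \cap P_i| = 1$.
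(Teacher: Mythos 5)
Your proof is correct and is essentially the same as the paper's: the paper's proof is a one-line appeal to Definition \ref{defn:Efx}, observing that with $|A_s \cap P_i| \leq 1$ the condition is vacuous or has zero right-hand side, and your write-up simply spells out that same reasoning (including the additivity step giving $v_i(A_s - g^*) = 0$) in full detail.
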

\begin{proof}
Since the set $A_s \cap P_i$ contains at most one good, the $\EFx$ condition holds for agent $i$ against $s$ (see Definition \ref{defn:Efx}).
\end{proof}

\begin{restatable}{corollary}{CorollaryKTwo}
\label{corollary:ktwo}
Consider any iteration of the algorithm in which the else-if condition (Line \ref{line:else-if}) executes, assigning good $\widehat{g}$ to agent $s\in [n]$. Let $\alloc = (A_1, \ldots, A_n)$ be the partial allocation among the agents at the end of the iteration. Also, let $i \in [n]$ be an agent such that $A_s \cap P_i = \{g_1, \ldots, g_k \}$, for count $k \geq 2$. Then, we have 
\begin{align*}
v_i(g_1)  \geq \eta \gamma \ v_i(\widehat{g}) \qquad \text{ and } \qquad 
 v_i(g_t)  \geq \gamma \ v_i(\widehat{g}) \quad \text{for all indices $t \in \{2,\ldots, k\}$}.
 \end{align*}
Here, goods $g_1$ to $g_k$ are indexed in order of inclusion. 
\end{restatable}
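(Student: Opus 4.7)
The plan is to split the argument by when each good in $A_s\cap P_i=\{g_1,\ldots,g_k\}$ was placed into the (current) bundle $A_s$. The if-block always creates a brand-new singleton bundle out of an empty one, so every good added to $A_s$ \emph{after} the first must have been introduced via an else-if iteration. Consequently, each $g_t$ with $t\ge 2$ was assigned during some else-if iteration in which $g_t$ itself played the role of $\widehat{g}$. Since the current $\widehat{g}$ of the corollary was still unallocated at that earlier moment, Lemma~\ref{lemma:else-if-bar} yields $\overline{v}(g_t)\ge\overline{v}(\widehat{g})$. Combining this with the range inequality~(\ref{ineq:range})---applied on the left via $v_i(g_t)>0$ (since $g_t\in P_i$) and on the right via the unconditional bound $\overline{v}(\widehat{g})\ge\sqrt{\gamma}\,v_i(\widehat{g})$---gives $v_i(g_t)\ge\sqrt{\gamma}\,\overline{v}(g_t)\ge\sqrt{\gamma}\,\overline{v}(\widehat{g})\ge\gamma\,v_i(\widehat{g})$, which is the desired bound for $t\ge 2$.

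For $g_1$ I would split further, based on whether $g_1$ is the founding good of $A_s$ (i.e., the first good ever placed in what is now $A_s$). If $g_1$ was itself added in a subsequent else-if iteration, then the chain above applies verbatim and yields $v_i(g_1)\ge\gamma\,v_i(\widehat{g})\ge\eta\gamma\,v_i(\widehat{g})$, using $\eta\in(0,1]$. The interesting case is when $g_1$ is the founding good, having been assigned through the if-block. In that iteration $g_1$ was drawn from the look-ahead set $U_\eta$ defined in Line~\ref{line:Ueta}, so $\overline{v}(g_1)\ge\eta\,\overline{v}(h)$ for every good $h$ unassigned at that time; in particular, $\overline{v}(g_1)\ge\eta\,\overline{v}(\widehat{g})$. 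Applying (\ref{ineq:range}) on the left with $v_i(g_1)>0$ and on the right as before, I obtain $v_i(g_1)\ge\sqrt{\gamma}\,\overline{v}(g_1)\ge\eta\sqrt{\gamma}\,\overline{v}(\widehat{g})\ge\eta\gamma\,v_i(\widehat{g})$, completing this case.

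The main subtlety to watch out for is that $g_1$ need not be the first good ever placed in $A_s$---it is merely the earliest good in $A_s\cap P_i$. This rules out a direct appeal to Lemma~\ref{lemma:first-good}, whose conclusion pertains only to the valuation of the \emph{receiving} agent at the relevant if-block iteration, not to an arbitrary other agent $i$. The workaround is to route the argument entirely through base values: the membership $g_1\in U_\eta$ supplies a base-value bound that transfers cleanly to $v_i(g_1)$ via (\ref{ineq:range}), without ever invoking Lemma~\ref{lemma:first-good} for agent $i$. The factor of $\eta$ appearing in the first inequality of the corollary is exactly the price of this transfer from the look-ahead set, which is why the bound on $g_1$ is (slightly) weaker than the one on the subsequent $g_t$'s.
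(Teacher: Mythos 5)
Your proof is correct and follows essentially the same route as the paper's: goods $g_2,\ldots,g_k$ are handled via Lemma~\ref{lemma:else-if-bar} and the range bound (\ref{ineq:range}), while $g_1$ is split according to whether it was assigned in the if-block (giving $\overline{v}(g_1)\geq\eta\,\overline{v}(\widehat{g})$ from membership in $U_\eta$, which is exactly the second part of Lemma~\ref{lemma:first-good}) or in the else-if block. Your explicit remark that $g_1$ need only be the earliest good of $A_s\cap P_i$ rather than of $A_s$ itself, and that the argument must therefore be routed through base values rather than through agent $i$'s valuation at the earlier iteration, correctly identifies the one subtlety the paper also handles.
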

\begin{proof}
Only good $g_1$ can be included in the bundle $A_s$ as a first good $f_a = g_1$, for some agent $a$. All the remaining goods $\{ g_2, \ldots, g_k \}$ are assigned in the else-if block of the algorithm. For any such good $g_t \in \{ g_2, \ldots, g_k\}$, Lemma \ref{lemma:else-if-bar} gives us $\overline{v}(g_t) \geq \overline{v}(\widehat{g})$; note that good $\widehat{g}$ is assigned in the considered iteration and $g_t$ in a previous one. Furthermore, for all indices $t \in \{ 2, \ldots, k \}$, the last inequality extends to 
\begin{align}
v_i(g_t) & \geq \sqrt{\gamma} \ \overline{v}(g_t) \tag{via (\ref{ineq:range}) and $g_t \in P_i$} \\ 
& \geq \sqrt{\gamma} \ \overline{v}(\widehat{g}) \nonumber \\ 
& \geq \gamma v_i(\widehat{g})  \label{ineq:chain}
\end{align}
The last inequality follows from equation (\ref{ineq:range}). Therefore, the second part of the corollary holds. 

Now, if $g_1$ is the first good included in the bundle $A_s$ (as $f_a = g_1$ for some agent $a$), then $\overline{v}(g_1) \geq \eta \overline{v}(\widehat{g})$ (Lemma \ref{lemma:first-good}). Otherwise, if $g_1$ is itself assigned in the else-if block,\footnote{Note that $g_1$ is the first good---in order of inclusion---in the subset $A_s \cap P_i$. It might not be the first good overall in $A_s$.} then we have $\overline{v}(g_1) \geq \overline{v}(\widehat{g})$ (Lemma \ref{lemma:else-if-bar}). Hence, either way, using a derivation similar to equation (\ref{ineq:chain}), we obtain $v_i(g_1) \geq \eta \gamma \ v_i(\widehat{g})$. This completes the proof of the corollary. 
\end{proof}

We now establish the main result of this section. 

\begin{theorem}
\label{theorem:EFx}
Given any fair division instance $\mathcal{I}=\langle [n],[m],\{v_i\}_i \rangle$ with range parameter $\gamma \in (0,1]$, Algorithm \ref{alg:EFx} computes a $\left(\frac{2\gamma}{\sqrt{5+4\gamma}-1}\right)$-$\EFx$ allocation in polynomial time.
\end{theorem}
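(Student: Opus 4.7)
The plan is to establish the $\alpha$-$\EFx$ guarantee with $\alpha = \gamma/\eta = \frac{2\gamma}{\sqrt{5+4\gamma}-1}$. Polynomial-time termination is immediate, since the outer while-loop runs at most $m$ times and every invocation of envy-cycle elimination is polynomial by Lemma~\ref{lemma:cycle-eliminate}. Fix agents $i, j$ and a good $g \in A_j \cap P_i$; the goal is to show $v_i(A_i) \geq \alpha \cdot v_i(A_j - g)$. If $i$ still lies in $Z$ at termination, Lemma~\ref{lem:i_in_Z} forces $|A_j \cap P_i| \leq 1$, so $v_i(A_j - g) = 0$ and the bound is vacuous.

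Otherwise $i$ received its first good $f_i$ at some iteration $T_{f_i}$. Enumerate $A_j \cap P_i = \{g_1, \ldots, g_k\}$ in the order in which they entered the bundle that ultimately becomes $A_j$; the cases $k \leq 1$ are immediate, so assume $k \geq 2$. Let $T^*$ denote the iteration at which $g_k$ enters this bundle. Since $g_k$ is not the first good in its bundle, $T^*$ must be an else-if iteration with $\widehat{g}_{T^*} = g_k$; combining this with $v_i(g_k) > 0$ and the else-if precondition forces $i \notin Z$ at $T^*$, and in particular $T_{f_i} < T^*$. Hence $g_k$ was unassigned at $T_{f_i}$, and Lemma~\ref{lemma:first-good} yields bound (b): $v_i(A_i) \geq v_i(f_i) \geq (\gamma/\eta)\, v_i(g_k)$. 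If $i$ itself is the source $s$ at $T^*$, then $i$'s bundle after $T^*$ already contains all of $g_1, \ldots, g_k$, so the monotonicity of $v_i(A_i)$ along cycle eliminations gives $v_i(A_i) \geq v_i(A_j)$, trivially implying EFx. Otherwise $i \neq s$, and the source property on $\mathcal{G}([n]\setminus Z, \alloc)$ combined with monotonicity gives bound (a): $v_i(A_i) \geq v_i(A_s^{\text{before }T^*}) = \sum_{t=1}^{k-1} v_i(g_t) = v_i(A_j - g_k)$.

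The final step invokes Corollary~\ref{corollary:ktwo} at iteration $T^*$ to obtain $v_i(g_1) \geq \eta\gamma\, v_i(g_k)$ and $v_i(g_t) \geq \gamma\, v_i(g_k)$ for $t \in \{2, \ldots, k\}$. Among the possible choices of $g$, the case $g = g_k$ follows immediately from (a), and the worst case is $g = g_1$ (since $g_1$ carries the weakest corollary bound and dominates intermediate $g_{t^*}$). For $g = g_1$ with $k = 2$, $v_i(A_j - g_1) = v_i(g_k)$, so bound (b) closes the argument directly. For $g = g_1$ with $k \geq 3$, combining bound (a) with the identity $v_i(A_j - g_1) = v_i(A_j - g_k) + v_i(g_k) - v_i(g_1)$ and the corollary's lower bounds reduces the desired $v_i(A_i) \geq (\gamma/\eta)\, v_i(A_j - g_1)$ to the algebraic inequality
\[
(\eta - \gamma)(\eta + k - 2) \geq 1 - \eta\gamma.
\]
This is the technical crux. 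The defining relation $\eta^2 + \eta = 1 + \gamma$ yields equality at $k = 3$: $(\eta - \gamma)(\eta + 1) = (\eta^2 + \eta) - \gamma(\eta + 1) = (1 + \gamma) - \gamma(\eta + 1) = 1 - \eta\gamma$. For $k \geq 4$ the left side grows strictly, using $\eta \geq \gamma$ throughout $\gamma \in (0,1]$ (equivalent to $\sqrt{5+4\gamma} \geq 2\gamma + 1$). The tightness at $k = 3$ is precisely what dictates the particular choice of $\eta$: it is tuned to simultaneously balance the $k = 2$ constraint (handled by (b)) and the $k = 3$ constraint (handled by (a) combined with the corollary), delivering the ratio $\gamma/\eta$ in both regimes.
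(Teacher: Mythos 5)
Your proof is correct and rests on exactly the same ingredients as the paper's: the source property of the restricted envy graph at the iteration where the last $P_i$-good enters $A_j$, Lemma~\ref{lemma:first-good} for the first good, the lower bounds of Corollary~\ref{corollary:ktwo}, and the defining relation $\eta^2+\eta=1+\gamma$ (which makes your inequality $(\eta-\gamma)(\eta+k-2)\geq 1-\eta\gamma$ tight at $k=3$, precisely where the paper's factor $\frac{\gamma(1+\eta)}{1+\gamma}$ equals $\gamma/\eta$). The only difference is organizational --- you analyze the final allocation directly, anchored at the iteration $T^*$, and fold the paper's sub-cases (including the intermediate goods $g_t$, which your $g_1$ bound correctly dominates since $v_i(g_t)\geq\gamma v_i(g_k)\geq\eta\gamma v_i(g_k)$) into one inequality parameterized by $k$, whereas the paper maintains the approximate-$\EFx$ invariant inductively over iterations; both are valid.
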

\begin{proof}
We prove, inductively, that in the algorithm each maintained partial allocation $\alloc$ is $\left(\frac{2\gamma}{\sqrt{5+4\gamma}-1}\right)$-$\EFx$ and, hence, the returned allocation also satisfies the stated guarantee. 

The base case of this induction argument holds, since the initial allocation (comprised of empty bundles) is $\EFx$. Now, consider any iteration of the algorithm and let $\alloc' = (A'_1, \ldots, A'_n)$ denote the partial allocation at the beginning of the considered iteration and $\alloc = (A_1, \ldots, A_n)$ be the allocation at the end of the iteration. We will assume, via the induction hypothesis, that the partial allocation $\alloc'$ is $\left(\frac{2\gamma}{\sqrt{5+4\gamma}-1}\right)$-$\EFx$ and prove that so is the updated one $\alloc$. 

Towards this, we will analyze the assignment through the if-block (Line \ref{Line:If}) and the else-if block (Line \ref{line:else-if}) separately. That is, we will establish the approximate $\EFx$ guarantee for $\alloc$ considering the following two complementary cases - {Case {\rm I}:} The if-condition executes in considered iteration or {Case {\rm II}:} The else-if condition executes in considered iteration. \\

\noindent 
\emph{Case I: The if-condition executes.} Here, the updated allocation $\alloc = (A_1, \ldots, A_n)$ is obtained from the starting allocation $\alloc'=(A'_1, \ldots, A'_n)$ by assigning the good $f_i$ to the selected agent $i$, who initially has an empty bundle ($A'_i = \emptyset$). Hence, we have $A_j = A'_j$ for all agents $j \neq i$ and $A_i = \{ f_i \}$. In this case we show that the approximate $\EFx$ guarantee continues to hold under $\alloc$ between any pair of agents $a, b \in [n]$. Note that, if $a, b \neq i$, then, inductively, the guarantee holds. For $a = i$, note that the value $v_i(A_i) > v_i(A'_i) =0$ and, hence (as in $\alloc'$), the approximate $\EFx$ guarantee continues to hold for agent $a=i$ against any other agent $b$ (with bundle $A'_b = A_b$). Finally, we consider the approximate $\EFx$ guarantee for agents $a \in [n]$ against agent $b=i$: since $|A_i| = 1$, we in fact have $\EFx$ against $b=i$. This completes the analysis under Case I. \\

\noindent 
\emph{Case II: The else-if condition executes.} Here, note that the approximate $\EFx$ guarantee satisfied by the (starting) partial allocation $\alloc'$ is upheld even after the envy-cycle resolutions in Line \ref{line:else-if-block}. That is, the partial allocation obtained after Line \ref{line:else-if-block} (and before the assignment of the good $\widehat{g}$) is also $\left(\frac{2\gamma}{\sqrt{5+4\gamma}-1}\right)$-$\EFx$. This claim follows from Lemma \ref{lemma:cycle-eliminate}: in envy-cycle elimination we only reassign the bundles and the valuations of the $n$ agents do not decrease. With a slight abuse of notation, we will reuse $\alloc'=(A'_1, \ldots, A'_n)$ to denote this resolved partial allocation. In particular, for the $\left(\frac{2\gamma}{\sqrt{5+4\gamma}-1}\right)$-$\EFx$ partial allocation $\alloc'$ the envy-graph $\mathcal{G}([n] \setminus Z, \alloc') $ is acyclic. Now, the algorithm selects a source $s$ in this graph and assigns good $\widehat{g}$ to agent $s$. Hence, in the updated allocation $\alloc = (A_1, \ldots, A_n)$, we have $A_i = A'_i$ for all agents $i \neq s$ and $A_s = A'_s + \widehat{g}$.  

In this case as well, we will establish the approximate $\EFx$ guarantee under $\alloc$ for every pair of agents $a, b \in [n]$. First, note that, for any agent $a$ that is contained in set $Z$ at the end of the considered iteration, $\EFx$ holds under $\alloc$ (see Lemma \ref{lem:i_in_Z}). Also, if in the considered pair $a,b \in [n]$, the agent $b \neq s$, then the approximate $\EFx$ guarantee (against $b$) carries forward from the allocation $\alloc'$, since $v_a(A_a) \geq v_a(A'_a)$ and $A_b = A'_b$. 

Hence, the remainder of the proof addresses the guarantee for agents $a \in  [n]\setminus Z$ against agent $b=s$. Note that if the assigned good $\widehat{g}$ is of zero value for agent $a \in  [n]\setminus Z$ (i.e., $v_a(\widehat{g}) =0$), then (as in the starting allocation $\alloc'$), the desired approximate $\EFx$ guarantee holds for $a$ against $A_s = A'_s + \widehat{g}$. Therefore, for the subsequent analysis we assume that $v_a(\widehat{g}) >0$. Recall that $P_a$ denotes the set of goods positively valued by agent $a$, i.e., $P_a = \left\{g \in [m] \mid v_a(g) > 0 \right\}$. Write count $k' \coloneqq |A'_s \cap P_a |$ and note that $|A_s \cap P_a| = k' + 1$, since $A_s = A'_s + \widehat{g}$ and $\widehat{g} \in P_a$.  
 
Based on the value of the count $k' = |A'_s \cap P_a|$, we have the following sub-cases. In each sub-case, we establish the stated approximate $\EFx$ guarantee, under $\alloc$, for agents $a \in [n] \setminus Z$ and against agent $s$, thereby completing the proof. \\

\noindent
\emph{Case II(a): $|A'_s \cap P_a| = 0$}. Here, $ |A_s \cap P_a| = k'+1 = 1$ and, hence, Corollary \ref{corollary:kone} (invoked with $i = a$) directly implies that $\EFx$ holds for agent $a$ against $s$ in the allocation $\alloc$. \\

\noindent
\emph{Case II(b): $|A'_s \cap P_a | = 1$}. Write $g_1$ to the denote the good that constitutes the singleton $|A'_s \cap P_a|$. In addition, since $\widehat{g} \in A_s$ and $v_a(\widehat{g}) >0$, we have $A_s \cap P_a = \{g_1, \widehat{g} \}$. 

Note that, in the current context, $s$ is a source in the envy graph $\mathcal{G}([n] \setminus Z, \alloc')$ and agent $a \in [n] \setminus Z$. Hence, $v_a(A_a) = v_a(A'_a) \geq v_a(A'_s) = v_a(g_1) = v_a(A_s - \widehat{g})$. This inequality implies envy-freeness for agent $a$ against $A_s$ upon the removal of $\widehat{g}$. 

We next address the removal of $g_1$ from $A_s$, i.e., we compare $v_a(A_a)$ and $v_a(\widehat{g})$. Since agent $a \in [n] \setminus Z$, this agent must have been assigned a good, $f_a$, in an earlier iteration. During that iteration, the following containment must have held: $\widehat{g} \in U$; note that the set of unassigned goods is monotonically decreasing. Hence, invoking Lemma \ref{lemma:first-good} (with $i=a$), we obtain $v_a(f_a) \geq \frac{\gamma}{\eta} \ v_a(\widehat{g})$. Since the valuation of any agent does not decrease throughout the execution of the algorithm (see, in particular, Lemma \ref{lemma:cycle-eliminate}), $v_a(A_a) = v_a(A'_a) \geq v_a(f_a) \geq  \frac{\gamma}{\eta} \ v_a(\widehat{g})$. 

With $\eta = \frac{\sqrt{5+4\gamma}-1}{2}$, we obtain the desired $\left(\frac{2\gamma}{\sqrt{5+4\gamma}-1}\right)$-$\EFx$ guarantee for allocation $\alloc$ in Case {\rm II(b)}:
\begin{align*}
v_a(A_a) \geq \frac{\gamma}{\eta} v_a(A_s - g) \quad \text{for all $g \in A_s \cap P_a$}. \\
\end{align*}
    
\noindent
\emph{Case II(c): $|A'_s \cap P_a | \geq 2$}. Write $g_1, \ldots, g_{k'}$ to denote all the $k' = |A'_s \cap P_a| \geq 2$ goods in the set $A'_s \cap P_a$; these goods are indexed in order of inclusion. Also, note that $A_s \cap P_a = \{g_1, g_2, \ldots, g_{k'}, \widehat{g} \}$. We will show that, in the current sub-case as well, it holds that
\begin{align}
v_a(A_a) \geq \frac{\gamma}{\eta} v_a(A_s - g) \qquad \text{for all $g \in A_s \cap P_a =  \{g_1, g_2, \ldots, g_{k'}, \widehat{g} \}$} \label{ineq:desired}
\end{align}
The desired inequality (\ref{ineq:desired}) directly holds for $g = \widehat{g}$: agent $a \in [n]\setminus Z$ and $s$ is a source in the envy-graph $\mathcal{G}([n]\setminus Z, \alloc')$, 
hence, we have $v_a(A_a) = v_a(A'_a) \geq v_a(A'_s) = v_a(A_s - \widehat{g})$. 

Next, we establish the desired inequality (\ref{ineq:desired}) for $g=g_1$ and subsequently for $g = g_t$, with indices $t \in \{2, \ldots, k'\}$. 

Using again the facts that $a \in [n]\setminus Z$ and $s$ is a source in the envy-graph $\mathcal{G}([n]\setminus Z, \alloc')$, we have  
\begin{align}
v_a(A_a) & = v_a(A'_a) \nonumber \\
& \geq v_a(A'_s) \nonumber \\
& = v_a(A'_s - g_1) + v_a(g_1) \nonumber \\
& \geq v_a(A'_s - g_1) + \eta \gamma \ v_a(\widehat{g}) \tag{applying Corollary \ref{corollary:ktwo} with $i = a$} \\
& =  \eta \gamma \ v_a \left(A'_s - g_1 + \widehat{g} \right) \ + \ \left( 1 - \eta \gamma \right) v_a( A'_s - g_1)   \nonumber \\
& =  \eta \gamma \ v_a \left(A_s - g_1 \right) \ +   \ \left( 1 - \eta \gamma \right) v_a( A'_s - g_1). \label{ineq:switch}
\end{align}
The last equality follows from $A_s = A'_s + \widehat{g}$. In the current sub-case, $|A'_s \cap P_a | \geq 2$. Hence, there exists at least one good $g_2 \in (A'_s - g_1)$. For $g_2$, Corollary \ref{corollary:ktwo}  (applied with $i = a$) gives us $v_a(g_2) \geq \gamma v_a(\widehat{g})$, i.e., we obtain $v_a(A'_s - g_1) \geq \gamma \ v_a(\widehat{g})$. Adding $\gamma \ v_a(A'_s - g_1)$ to both sides of the last inequality gives us 
\begin{align}
\left( 1 + \gamma \right) v_a(A'_s - g_1) \geq \gamma \left( v_a(A'_s - g_1) + v_a(\widehat{g}) \right) = \gamma \ v_a(A'_s + \widehat{g} - g_1) = \gamma \  v_a(A_s - g_1). \label{ineq:huff}
\end{align}
Equation (\ref{ineq:huff}) reduces to 
\begin{align}
v_a(A'_s - g_1) \geq \frac{\gamma}{1 + \gamma} v_a(A_s - g_1). \label{ineq:wm}
\end{align}

Inequalities (\ref{ineq:switch}) and (\ref{ineq:wm}) give us 
\begin{align}
v_a(A_a) & \geq \eta \gamma \ v_a \left(A_s - g_1 \right) \ + \ \left( 1 - \eta \gamma \right) \frac{\gamma}{1 + \gamma} v_a \left(A_s - g_1 \right) \nonumber \\
& = \left( \eta \gamma + \left( 1 - \eta \gamma \right) \frac{\gamma}{1 + \gamma} \right) v_a \left(A_s - g_1 \right) \nonumber \\
& =  \left(\frac{\gamma(1+\eta)}{1+\gamma}\right)v_a(A_s - g_1). \label{ineq:here}
\end{align} 

For parameter $\eta = \frac{\sqrt{5+4\gamma}-1}{2}$, it holds that $\left(\frac{\gamma(1+\eta)}{1+\gamma}\right) = \frac{\gamma}{\eta}$; in fact, $\eta$ is specifically chosen to satisfy this equality.\footnote{Equivalently, $\eta$ is set as the positive root of the quadratic equation $\eta^2 + \eta = 1 + \gamma$.} Hence, equation (\ref{ineq:here}) leads to the desired inequality (\ref{ineq:desired}) for $g=g_1$.

Finally, to establish inequality (\ref{ineq:desired}) for $g=g_t$, with index $t \in \{2, \ldots, k'\}$, we start with 
\begin{align}
v_a(A_a) & = v_a(A'_a) \nonumber \\
& \geq v_a(A'_s) \nonumber \\
& = v_a(A'_s - g_t) + v_a(g_t) \nonumber \\
& \geq v_a(A'_s - g_t) +  \gamma v_a(\widehat{g}) \tag{applying Corollary \ref{corollary:ktwo} with $i = a$} \\
& =   \gamma \ v_a \left(A'_s - g_t + \widehat{g} \right) \ + \ \left( 1 - \gamma \right) v_a( A'_s - g_t)   \nonumber \\
& =  \gamma \ v_a \left(A_s - g_t \right) \ +   \ \left( 1 - \gamma \right) v_a( A'_s - g_t)  \label{ineq:switch2}
\end{align} 
Since $g_1 \in (A'_s - g_t)\cap P_a$, using Corollary \ref{corollary:ktwo} (for $g_1$ and $i = a$) we obtain $v_a(A'_s - g_t) \geq v_a(g_1) \geq \eta \gamma \ v_a( \widehat{g})$. The last inequality is equivalent to 
\begin{align}
v_a(A'_s - g_t) \geq \frac{\eta \gamma}{1 + \eta \gamma} v_a(A'_s - g_t + \widehat{g}) = \frac{\eta \gamma}{1 + \eta \gamma}  v_a(A_s - g_t). \label{ineq:etagamma}
\end{align}

Using equations (\ref{ineq:switch2}) and (\ref{ineq:etagamma}), we obtain 
\begin{align}
v_a(A_a) & \geq \gamma \ v_a \left(A_s - g_t \right) \ +   \ \left( 1 - \gamma \right)  \frac{\eta \gamma}{1 + \eta \gamma}  v_a(A_s - g_t) \nonumber \\
& = \left( \gamma + \left( 1 - \gamma \right)  \frac{\eta \gamma}{1 + \eta \gamma} \right) v_a(A_s - g_t) \nonumber \\
& =  \frac{\gamma(1+\eta)}{1+\eta\gamma} v_a(A_s - g_t). \label{ineq:there}
\end{align} 
Since $\eta \leq 1$, the approximation factor $\frac{\gamma(1+\eta)}{1+\eta\gamma} \geq \frac{\gamma(1+\eta)}{1+\gamma} = \frac{\gamma}{\eta}$; recall that $\eta$ is specifically chosen to satisfy the last equality. Therefore, from equation (\ref{ineq:there}), we obtain the desired inequality (\ref{ineq:desired}) for $g=g_t$, with indices $t \in \{2, \ldots, k'\}$. This completes the analysis for Case II(c). \\

Overall, in each case, the allocation $\alloc$ (maintained by the algorithm at the end of the considered iteration) upholds the $\EFx$ guarantee with approximation factor $\frac{\gamma}{\eta} = \left(\frac{2\gamma}{\sqrt{5+4\gamma}-1}\right)$. This completes the inductive argument and establishes the theorem. 
\end{proof}

\begin{remark}
Appendix \ref{appendix:EFx-tight} shows that the approximation guarantee established in Theorem \ref{theorem:EFx}, for our algorithm $\labase$, is tight.
\end{remark}

\section{Finding Approximately $\tEFx$ Allocations}
\label{section:tEFx}

This section shows that, for instances with range parameter $\gamma$, one can efficiently compute a $2 \gamma$-$\tEFx$ allocation (Theorem \ref{theorem:tEFx}). Specifically, for $\gamma \in [1/2, 1]$, we obtain an exact $\tEFx$ allocation;\footnote{Recall the notational convention that, for parameter $\beta >1$, a $\beta$-$\tEFx$ allocation refers to an exact  $\tEFx$ allocation.} see Corollary \ref{corollary:exact-tEFx}.

This result for $\tEFx$ is obtained by directly executing the envy-cycle elimination algorithm (Section \ref{section:envy-cycle}) while assigning the goods in decreasing order of the base values. 

\begin{theorem}
\label{theorem:tEFx}
Given any fair division instance $\mathcal{I}=\langle [n],[m],\{v_i\}_i \rangle$ with range parameter $\gamma$, one can compute a $2\gamma$-$\tEFx$ allocation in polynomial time. 
\end{theorem}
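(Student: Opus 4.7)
The plan is to run the envy-cycle elimination framework from Section \ref{section:envy-cycle} while processing the goods in decreasing order of their base values $\overline{v}$. Concretely, initialize $A_i = \emptyset$ for every agent $i$, sort the goods so that $\overline{v}(g_1) \geq \overline{v}(g_2) \geq \ldots \geq \overline{v}(g_m)$, and, for each $t$ from $1$ to $m$, apply Lemma \ref{lemma:cycle-eliminate} to the full envy-graph $\mathcal{G}([n], \alloc)$ to make it acyclic via bundle swaps and then assign $g_t$ to a source $s$ by setting $A_s \gets A_s + g_t$. Polynomial-time termination is immediate, so the substance is the fairness guarantee; using the convention stated below Definition \ref{defn:tEfx}, it suffices to prove that the returned allocation is $\min(1, 2\gamma)$-$\tEFx$, which then yields Corollary \ref{corollary:exact-tEFx} for $\gamma \in [1/2, 1]$ as well.

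I would argue inductively that after every iteration the maintained allocation $\alloc$ is $\min(1, 2\gamma)$-$\tEFx$. The single structural fact driving the whole analysis is that whenever a good $h$ was assigned in an iteration strictly before $g$ and $h \in P_i$, the base-value ordering $\overline{v}(h) \geq \overline{v}(g)$ combined with inequality (\ref{ineq:range}) gives $v_i(h) \geq \sqrt{\gamma} \, \overline{v}(h) \geq \sqrt{\gamma} \, \overline{v}(g)$ while $v_i(g) \leq (1/\sqrt{\gamma}) \, \overline{v}(g)$, and therefore
\begin{align*}
v_i(h) \ \geq \ \gamma \, v_i(g).
\end{align*}
That cycle elimination preserves $\alpha$-$\tEFx$ is a short check: the multiset of bundles is unchanged and each agent's own value only weakly increases, so every transfer inequality $v_i(B_i + g) \geq \alpha \, v_i(B_j - g)$ follows from the corresponding pre-swap inequality applied to the bundle now labelled $B_j$.

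Fix the iteration that assigns $g$ to the source $s$; write $\alloc'$ for the partial allocation right before the assignment (acyclic after cycle elimination, and $\min(1, 2\gamma)$-$\tEFx$ by the induction hypothesis) and $\alloc$ for the updated one, so $A_s = A_s' + g$ and $A_j = A_j'$ for $j \neq s$. Since only $A_s$ changes, the only non-trivial check is $\tEFx$ for an agent $i \neq s$ against $s$; fix $h \in (A_s' + g) \cap P_i$ and set $\alpha \coloneqq \min(1, 2\gamma)$. If $h = g$ then the source property yields the even stronger inequality $v_i(A_i) \geq v_i(A_s') = v_i((A_s' + g) - g)$. If instead $h \in A_s' \cap P_i$, then expanding $v_i((A_s' + g) - h) = v_i(A_s') + v_i(g) - v_i(h)$ and using $v_i(A_i) \geq v_i(A_s')$ shows the desired $\tEFx$ inequality $v_i(A_i + h) \geq \alpha \, v_i((A_s' + g) - h)$ is implied by
\begin{align*}
(1 - \alpha) \, v_i(A_s') \ + \ (1 + \alpha) \, v_i(h) \ \geq \ \alpha \, v_i(g).
\end{align*}

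The only technical step is this last inequality, which I would handle by splitting on $|A_s' \cap P_i|$. If $A_s' \cap P_i = \{h\}$, then $v_i(A_s') = v_i(h)$ and the inequality collapses to $2 \, v_i(h) \geq \alpha \, v_i(g)$, which follows from the structural bound $v_i(h) \geq \gamma v_i(g)$ together with $\alpha \leq 2\gamma$. Otherwise there exists $h' \in (A_s' - h) \cap P_i$, and the structural bound also gives $v_i(h') \geq \gamma v_i(g)$; using $v_i(A_s') \geq v_i(h) + v_i(h')$, the left-hand side is at least $2 \, v_i(h) + (1 - \alpha) v_i(h') \geq \gamma(3 - \alpha) v_i(g)$, which dominates $\alpha v_i(g)$ precisely when $\alpha \leq \frac{3\gamma}{1+\gamma}$, and this holds for $\alpha = \min(1, 2\gamma)$ and every $\gamma \in (0,1]$ (the binding case is $\gamma = 1/2$, where $2\gamma = 1 = \frac{3\gamma}{1+\gamma}$). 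The main subtlety I expect is verifying that the crossover $\gamma = 1/2$ is exactly where the guarantee transitions from $2\gamma$-$\tEFx$ to exact $\tEFx$, but the case-split above does this cleanly and yields both Theorem \ref{theorem:tEFx} and Corollary \ref{corollary:exact-tEFx}.
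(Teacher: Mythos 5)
Your proposal is correct and follows essentially the same route as the paper: the identical algorithm (envy-cycle elimination with goods processed in decreasing order of base value), the same induction, the same source-vertex property, and the same key bound $v_i(h) \geq \gamma\, v_i(g)$ for earlier-assigned goods. The only difference is cosmetic: the paper closes the argument by adding $v_a(\ell)$ to both sides and using $x + 2\gamma y \geq \min\{1,2\gamma\}\,(x+y)$, which avoids your case split on $|A_s' \cap P_i|$, though your version is also valid.
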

\begin{proof}
To compute an approximately $\tEFx$ allocation, we follow the envy-cycle elimination algorithm. In particular, we start with empty bundles and assign the goods iteratively. We ensure that---for each maintained partial allocation $\alloc$---the envy graph over all the agents (i.e., $\mathcal{G}([n], \alloc)$) is acyclic (Lemma \ref{lemma:cycle-eliminate}). Then, we assign the good $\widehat{g}$ that has the maximum base value, among all the unassigned ones, to a source $s \in  [n]$ in the graph.  

We prove, inductively, that the partial allocation maintained in every iteration of the algorithm is $2\gamma$-approximately $\tEFx$.

The base case of the induction argument holds, since the initial allocation (comprised of empty bundles) is $\EFx$ (and, hence, $\tEFx$). Now, consider any iteration of the algorithm, and let $\alloc' = (A'_1, \ldots, A'_n)$ denote the partial allocation at the beginning of the iteration and $\alloc = (A_1, \ldots, A_n)$ denote the allocation at the end of the iteration. By the induction hypothesis, we assume that the partial allocation $\alloc'$ is $2\gamma$-$\tEFx$. We will prove that so is the updated one $\alloc$. In particular, we prove the approximation guarantee for every pair of agents $a,b\in [n]$. 

Write $s$ to denote the source agent that receives the good $\widehat{g}$ in the considered iteration. For $a,b\neq s$, since $A_a=A'_a$
and $A_b=A'_b$, the approximation guarantee from $\alloc'$ carries forward. If for the considered pair $a,b$ we have $a=s$, then $v_s(A_s) \geq v_s(A'_s)$ and $A_b =  A'_b$. Hence, again, the approximation guarantees carries forward from $\mathcal{A}'$. Therefore, the remainder of the proof addresses the guarantee for agents $a \in [n]$ against agent $b=s$. Furthermore, we may assume that $v_a(\widehat{g})>0$; otherwise, the envy-freeness guarantee up to the transfer of a \emph{positively} valued good remains as in starting allocation $\mathcal{A}'$. Furthermore, using Corollary \ref{corollary:kone}, we assume that $|A_s\cap P_a|\geq 1$. 

Now, since $s$ is a source vertex in the envy graph $\mathcal{G}([n], \alloc)$, for any agent $a \in [n]$ and any good $\ell \in A'_s \cap P_a$, we have: 
\begin{align}
v_a(A'_a) &\geq v_a(A'_s) = v_a(A'_s- \ell) + v_a(\ell) \label{ineq:source_tefx}
\end{align} 
Since $\ell \in P_a$, we have $v_a(\ell) >0$. Hence, equation (\ref{ineq:range}) implies $v_a(\ell) \geq \sqrt{\gamma} \overline{v}(\ell)$. Furthermore, using the fact that goods are assigned in decreasing order of base value, we obtain
\begin{align}
v_a(\ell) \geq \sqrt{\gamma} \ \overline{v}(\ell) \geq \sqrt{\gamma} \ \overline{v}(\widehat{g}) \geq \gamma v_a(\widehat{g}) \label{ineq:interim}
\end{align}

Adding $v_a(\ell)$ on both sides of inequality \eqref{ineq:source_tefx} gives us 
\begin{align*}
	v_a(A'_a + \ell) &\geq v_a(A'_s-\ell) + 2v_a(\ell)\\
	&\geq v_a(A'_s- \ell) + 2\gamma v_a(\widehat{g}) \tag{via (\ref{ineq:interim})} \\
	&\geq \min\{1,2\gamma\} \ v_a(A'_s + \widehat{g} - \ell)\\
	& \geq \min\{1,2\gamma\}\ v_a(A_s - \ell), \tag{$A_s=A'_s+\widehat{g}$}
\end{align*} 
This gives the desired approximate guarantee for $\tEFx$, completing the induction argument and proving the theorem.
\end{proof}

We note that a further improvement can be obtained when $\gamma \leq 1/2$. Specifically, by choosing $\eta=\frac{\sqrt{5-4\gamma^2}-1}{2(1-\gamma)}$ in \textsc{LaBase}, the allocation returned is $\bigg(\frac{5\gamma}{\gamma+\sqrt{5-4\gamma^2}}\bigg)$-$\tEFx$. For fair division instances with $\gamma \geq 1/2$, Theorem \ref{theorem:tEFx} implies that an exact $\tEFx$ allocation can be efficiently computed. Formally, 

\begin{corollary} \label{corollary:exact-tEFx}
Given any fair division instance $\mathcal{I}=\langle [n],[m],\{v_i\}_i \rangle$ with range parameter $\gamma \in [1/2,1]$, one can compute an exact $\tEFx$ allocation in polynomial time. 
\end{corollary}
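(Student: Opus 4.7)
The plan is to derive Corollary \ref{corollary:exact-tEFx} directly from Theorem \ref{theorem:tEFx} by specializing the range parameter to $\gamma \in [1/2,1]$. First I would invoke Theorem \ref{theorem:tEFx} to run the envy-cycle elimination algorithm (assigning goods in decreasing order of base value) on the given instance, producing in polynomial time an allocation that is $2\gamma$-$\tEFx$. Since $\gamma \geq 1/2$, the approximation parameter satisfies $2\gamma \geq 1$. For the strict case $\gamma > 1/2$, the notational convention stated just after Definition \ref{defn:tEfx}---that any $\beta$-$\tEFx$ allocation with $\beta > 1$ is, by definition, an exact $\tEFx$ allocation---immediately yields the claim.

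To handle the boundary case $\gamma = 1/2$ cleanly, I would re-examine the final inequality chain in the proof of Theorem \ref{theorem:tEFx}, namely the bound
\begin{align*}
v_a(A_a + \ell) \;\geq\; v_a(A'_s - \ell) + 2 v_a(\ell) \;\geq\; v_a(A'_s - \ell) + 2\gamma\, v_a(\widehat{g}) \;\geq\; \min\{1,2\gamma\}\, v_a(A_s - \ell),
\end{align*}
which holds for every agent $a$, every source agent $s$ that newly receives $\widehat{g}$, and every $\ell \in A_s \cap P_a$. When $\gamma \geq 1/2$ we have $\min\{1,2\gamma\} = 1$, so this reduces to $v_a(A_a + \ell) \geq v_a(A_s - \ell)$, which is precisely the definition of exact $\tEFx$ (i.e., $\alpha = 1$ in Definition \ref{defn:tEfx}). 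Combined with the base case and the envy-cycle argument already carried out inductively in Theorem \ref{theorem:tEFx}, this yields that the final allocation is exactly $\tEFx$.

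No real obstacle is anticipated: the corollary is a direct specialization of Theorem \ref{theorem:tEFx}, and the only mildly delicate point is distinguishing the strict inequality $2\gamma > 1$ (handled by convention) from the equality $2\gamma = 1$ (handled by noting that the $\min\{1,2\gamma\}$ term in the proof collapses to $1$). The polynomial-time guarantee is inherited verbatim from the algorithm used in Theorem \ref{theorem:tEFx}.
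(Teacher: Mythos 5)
Your proposal is correct and matches the paper's (implicit) argument: the corollary is obtained by specializing Theorem \ref{theorem:tEFx} to $\gamma \geq 1/2$, where $2\gamma \geq 1$ and the stated convention (together with Definition \ref{defn:tEfx} at $\alpha = 1$ for the boundary case $2\gamma = 1$) gives exact $\tEFx$. Your extra care in re-examining the $\min\{1,2\gamma\}$ term for $\gamma = 1/2$ is sound but not strictly necessary, since a $1$-$\tEFx$ allocation is already exact by definition.
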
 
\section{Finding Approximately $\PMMS$ Allocations}
\label{section:pmms}
This section develops a polynomial time algorithm for finding $\frac{5}{6}\gamma$-$\PMMS$ allocations for fair division instances $\mathcal{I}$ with range parameter $\gamma \in (0,1]$. 

To obtain this result we first establish a reduction from the given instance $\mathcal{I}$ to an instance $\widetilde{\mathcal{I}}$ with range parameter $1$ (Theorem \ref{theorem:pmms-red}). The reduction is approximation preserving up to a factor of $\gamma$: any $\alpha$-$\PMMS$ allocation (bearing a relevant property) in $\widetilde{\mathcal{I}}$  is guaranteed to be an $\alpha \gamma$-$\PMMS$ allocation in $\mathcal{I}$. 

With this reduction in hand, we proceed to develop (in Section \ref{section:pmms-rpone}) a polynomial time algorithm that computes a $\frac{5}{6}$-$\PMMS$ allocation (with the relevant property) for instances with range parameter $1$. This algorithmic result and the reduction give us the stated $\frac{5}{6}\gamma$-$\PMMS$ approximation guarantee (Theorem \ref{theorem:pmms}). 

\begin{restatable}{theorem}{ThmPmmsRed} 
\label{theorem:pmms-red}
 Given any fair division instance $\mathcal{I}=\langle [n],[m],\{v_i\}_i \rangle$ with range parameter $\gamma$, we can efficiently construct another instance $\widetilde{\mathcal{I}}=\langle [n],[m],\{\widetilde{v}_i\}_i \rangle$ with range parameter $1$ such that 
 \begin{itemize}
 \item For each agent $i \in [n]$, the subset of positively valued goods, $P_i$, remains unchanged, and 
 \item Any $\alpha$-$\PMMS$ allocation $\alloc=(A_1,\ldots,A_n)$ in $\widetilde{\mathcal{I}}$ with the property that $A_i\subseteq P_i$, for all $i \in [n]$, is an $\alpha\gamma$-$\PMMS$ allocation in $\mathcal{I}$.
 \end{itemize}
\end{restatable}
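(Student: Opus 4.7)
The plan is to define the reduced instance $\widetilde{\mathcal{I}}$ by setting, for each agent $i \in [n]$ and good $g \in [m]$, $\widetilde{v}_i(g) = \overline{v}(g)$ whenever $v_i(g) > 0$, and $\widetilde{v}_i(g) = 0$ otherwise. Here $\overline{v}(g)$ is the base value introduced in Section \ref{section:notation}. This construction is clearly computable in polynomial time; by design $\widetilde{v}_i(g) > 0 \iff v_i(g) > 0$, so the positively valued set $P_i$ is preserved for every agent; and since every nonzero value of good $g$ under $\widetilde{v}$ equals $\overline{v}(g)$, the range parameter of $\widetilde{\mathcal{I}}$ is exactly $1$.

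Next I would establish a pointwise sandwich between $v_i$ and $\widetilde{v}_i$. Inequality (\ref{ineq:range}) gives $v_i(g) \in [\sqrt{\gamma}\,\overline{v}(g),\; \overline{v}(g)/\sqrt{\gamma}]$ whenever $g \in P_i$, while both $v_i(g)$ and $\widetilde{v}_i(g)$ vanish for $g \notin P_i$. Summing additively over any subset $S \subseteq [m]$ yields
\[\sqrt{\gamma}\,\widetilde{v}_i(S) \;\leq\; v_i(S) \;\leq\; \tfrac{1}{\sqrt{\gamma}}\,\widetilde{v}_i(S).\]
This pointwise bound then lifts to the maximin shares: for any partition $(X,\, S \setminus X)$, the upper inequality forces $\min\{v_i(X), v_i(S \setminus X)\} \leq \tfrac{1}{\sqrt{\gamma}}\,\min\{\widetilde{v}_i(X), \widetilde{v}_i(S \setminus X)\}$, and taking the maximum over $X \subseteq S$ on both sides gives $\mu_i^{(2)}(S) \leq \tfrac{1}{\sqrt{\gamma}}\,\widetilde{\mu}_i^{(2)}(S)$, i.e., $\widetilde{\mu}_i^{(2)}(S) \geq \sqrt{\gamma}\,\mu_i^{(2)}(S)$.

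Finally, given any $\alpha$-$\PMMS$ allocation $\alloc = (A_1, \ldots, A_n)$ in $\widetilde{\mathcal{I}}$, I will chain the two inequalities above: for every pair of agents $i \neq j$,
\[v_i(A_i) \;\geq\; \sqrt{\gamma}\,\widetilde{v}_i(A_i) \;\geq\; \alpha \sqrt{\gamma}\,\widetilde{\mu}_i^{(2)}(A_i \cup A_j) \;\geq\; \alpha \gamma\,\mu_i^{(2)}(A_i \cup A_j),\]
which is the desired $\alpha\gamma$-$\PMMS$ guarantee in $\mathcal{I}$. I do not anticipate a serious obstacle here, as the entire argument follows directly from the definition of the range parameter via (\ref{ineq:range}); the only mildly delicate step is making sure the max-over-partitions step is applied on the correct side of the inequality so that $\widetilde{\mu}_i^{(2)}$ comes out \emph{above} $\sqrt{\gamma}\mu_i^{(2)}$ rather than below. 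Note that the hypothesis $A_i \subseteq P_i$ is not actually invoked in this reduction argument (goods outside $P_i$ contribute zero on both sides of the sandwich, and likewise to $\mu_i^{(2)}(A_i \cup A_j)$); it is presumably stated to match the structural form of the allocations produced by the algorithm of Section \ref{section:pmms-rpone} for instances with $\gamma = 1$.
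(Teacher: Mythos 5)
Your proposal is correct and follows essentially the same route as the paper: replace each positive value by the base value $\overline{v}(g)$, observe the pointwise sandwich $\sqrt{\gamma}\,\widetilde{v}_i(S) \leq v_i(S) \leq \tfrac{1}{\sqrt{\gamma}}\,\widetilde{v}_i(S)$, lift it to $\widetilde{\mu}_i^{(2)}(S) \geq \sqrt{\gamma}\,\mu_i^{(2)}(S)$, and chain the two $\sqrt{\gamma}$ losses. Your closing remark is also accurate: because you bound $v_i(A_i)$ directly against $\widetilde{v}_i(A_i)$ (which zeroes out goods outside $P_i$) rather than against $\overline{v}(A_i)$ as the paper does, the hypothesis $A_i \subseteq P_i$ is indeed not needed for the reduction itself, so your argument is a slight sharpening of the paper's.
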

\begin{proof}
To construct instance $\widetilde{\mathcal{I}}$ from the original instance $\mathcal{I}$, we keep the set of agents $[n]$ along with the set of goods $[m]$ unchanged. The agents' additive valuations $\widetilde{v}_i$ in instance $\widetilde{\mathcal{I}}$ are obtained as follows: for each agent $i$ and good $g$, if $v_i(g) = 0$, then we set $\widetilde{v}_i(g) = 0$. Otherwise, if $v_i(g) > 0$ (i.e., $g \in P_i$), then set $\widetilde{v}_i(g)$ as the base value of the good, $\widetilde{v}_i(g) = \overline{v}(g)$. Note that, by construction, for each agent $i$ the set of positively valued goods $P_i$ is the same between the two instances. Furthermore, in $\widetilde{\mathcal{I}}$, across the agents, the nonzero value of each good $g$ is $\overline{v}(g)$. Hence, the constructed instance $\widetilde{\mathcal{I}}$ has range parameter $1$.

We begin by comparing the pairwise maximin shares $\mut_i(\cdot)$ and $\omut_i(\cdot)$ under the two instances $\mathcal{I}$ and $\widetilde{\mathcal{I}}$, respectively.  Recall that for any agent $i$ and any subset of goods $S \subseteq [m]$, the pairwise maximin share $\mut_i(S) \coloneqq \max_{X \subseteq S} \ \min\{v_i(X), v_i(S \setminus X) \}$. In addition, write $\omut_i(S) \coloneqq \max_{Y \subseteq S} \ \min\{\widetilde{v}_i(Y), \widetilde{v}_i(S \setminus Y) \}$. To compare the pairwise maximin shares, for any subset of goods $S \subseteq [m]$, let $(X^*, S \setminus X^*)$ be a bi-partition that induces $\mut_i(S)$, i.e.,  
\begin{align}
\mut_i(S) = \min\{ v_i(X^*), v_i(S \setminus X^*) \} = \min\{ v_i(X^* \cap P_i), v_i\left( \left(S\setminus X^*\right) \cap P_i \right) \} \label{ineq:mut}
\end{align} 
In addition, the property of the base values in instance $\mathcal{I}$ (see equation (\ref{ineq:range})) gives us $v_i(X^* \cap P_i) \leq \frac{1}{\sqrt{\gamma}} \overline{v} (X^* \cap P_i)$ and $v_i\left( \left(S\setminus X^*\right) \cap P_i \right) \leq \frac{1}{\sqrt{\gamma}} \overline{v} \left( \left(S\setminus X^*\right) \cap P_i \right)$. Note that, for each good $g \in P_i$, the constructed value $\widetilde{v}_i(g) = \overline{v}(g)$. Hence, 
\begin{align}
\sqrt{\gamma} \ v_i(X^* \cap P_i) \leq \widetilde{v}_i (X^* \cap P_i) \quad \text{ and } \quad \sqrt{\gamma} \ v_i\left( \left(S\setminus X^*\right) \cap P_i \right) \leq  \widetilde{v}_i \left( \left(S\setminus X^*\right) \cap P_i \right) \label{ineq:muta}
\end{align}
Inequalities (\ref{ineq:mut}) and (\ref{ineq:muta}) lead to 
\begin{align}
\sqrt{\gamma} \  \mut_i(S) & = \sqrt{\gamma} \ \min\{ v_i(X^* \cap P_i), v_i\left( \left(S\setminus X^*\right) \cap P_i \right) \} \nonumber \\
& \leq \min \left\{ \widetilde{v}_i (X^* \cap P_i), \widetilde{v}_i \left( \left(S\setminus X^*\right) \cap P_i \right) \right\} \nonumber \\
& \leq \max_{Y \subseteq S} \ \min \left\{ \widetilde{v}_i (Y \cap P_i), \widetilde{v}_i \left( \left(S\setminus Y\right) \cap P_i \right) \right\} \nonumber \\
& = \max_{Y \subseteq S} \ \min \left\{ \widetilde{v}_i (Y), \widetilde{v}_i \left( S\setminus Y \right) \right\} \nonumber \\
& = \omut_i(S) \label{ineq:mu-comp}
\end{align}

Now, fix any $\alpha$-$\PMMS$ allocation $\alloc=(A_1,\ldots,A_n)$ in $\widetilde{\mathcal{I}}$ with the property that $A_i\subseteq P_i$, for all $i \in [n]$. Since $\alloc$ is $\alpha$-$\PMMS$ in $\widetilde{\mathcal{I}}$, for all agents $i, j \in [n]$, we have 
\begin{align}
\widetilde{v}_i(A_i) \geq \alpha \omut_i(A_i \cup A_j) \underset{\text{via (\ref{ineq:mu-comp})}}{\geq}  \alpha \sqrt{\gamma} \ \mut_i (A_i \cup A_j) \label{ineq:puma} 
\end{align}
Furthermore, using the property that $A_i \subseteq P_i$ and equation (\ref{ineq:range}), we get 
\begin{align}
v_i(A_i) \geq \sqrt{\gamma} \ \overline{v} (A_i) \geq \sqrt{\gamma} \ \widetilde{v}_i(A_i) \label{ineq:ga}
\end{align}
Here, the last inequality follows from the definition of $\widetilde{v}_i(\cdot)$. 

Inequalities (\ref{ineq:ga}) and (\ref{ineq:puma}) give us the stated approximate $\PMMS$ guarantee for allocation $\alloc$ in instance $\mathcal{I}$; in particular, $v_i(A_i) \geq \alpha \gamma \ \mut_i(A_i \cup A_j)$ for all agents $i, j \in [n]$. The theorem stands proved. 
\end{proof}

With Theorem \ref{theorem:pmms-red} in hand, we focus our attention on instances with range parameter $1$. For such instances, in Section \ref{section:pmms-rpone} we develop a polynomial-time algorithm for computing $\frac{5}{6}$-$\PMMS$ allocations. The algorithm developed in Section \ref{section:pmms-rpone} in fact finds a $\frac{5}{6}$-$\PMMS$ allocation $\alloc = (A_1, \ldots, A_n)$ in which for each agent $i$ we have $A_i \subseteq P_i$ (see Lemma \ref{lemma:Gbar-acyclic} in Section \ref{section:pmms-rpone}). Hence, using Theorem \ref{theorem:pmms-red} and Theorem \ref{theorem:rp-one} (from Section \ref{section:pmms-rpone}) we obtain the main result for finding approximately $\PMMS$ allocations:

\begin{theorem}
\label{theorem:pmms}
For any given fair division instance $\mathcal{I}=\langle [n],[m],\{v_i\}_i \rangle$, with range parameter $\gamma \in (0,1]$, we can compute a $\frac{5}{6} \gamma$-$\PMMS$ allocation in polynomial time.
\end{theorem}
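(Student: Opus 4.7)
The plan is to combine the two technical results that this section has already assembled, since Theorem \ref{theorem:pmms} is essentially a composition of a reduction step and an algorithmic step. Concretely, given an input instance $\mathcal{I}=\langle [n],[m],\{v_i\}_i \rangle$ with range parameter $\gamma \in (0,1]$, I would first invoke Theorem \ref{theorem:pmms-red} to construct, in polynomial time, an auxiliary instance $\widetilde{\mathcal{I}}=\langle [n],[m],\{\widetilde{v}_i\}_i \rangle$ whose range parameter equals $1$ and which preserves, for every agent $i$, the set $P_i$ of positively valued goods.

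Next, since $\widetilde{\mathcal{I}}$ has range parameter $1$, I would apply the algorithm from Section \ref{section:pmms-rpone} (whose guarantee is recorded as Theorem \ref{theorem:rp-one}, and whose supporting Lemma \ref{lemma:Gbar-acyclic} in that section establishes the required $A_i \subseteq P_i$ property) to compute, in polynomial time, a $\frac{5}{6}$-$\PMMS$ allocation $\alloc=(A_1,\ldots,A_n)$ in $\widetilde{\mathcal{I}}$ with the additional property that $A_i \subseteq P_i$ for every $i \in [n]$. This latter property is crucial: it is precisely the hypothesis required to invoke the second conclusion of Theorem \ref{theorem:pmms-red}.

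Finally, I would appeal to the second bullet of Theorem \ref{theorem:pmms-red} with $\alpha = \frac{5}{6}$. Since $\alloc$ is $\frac{5}{6}$-$\PMMS$ in $\widetilde{\mathcal{I}}$ and satisfies $A_i \subseteq P_i$ for all $i$, the same allocation $\alloc$ is a $\frac{5}{6}\gamma$-$\PMMS$ allocation in the original instance $\mathcal{I}$, as desired. Each of the three steps runs in polynomial time, so the overall algorithm does as well.

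The main obstacle in this composition is not in the proof of Theorem \ref{theorem:pmms} itself, which amounts to a short chain of implications once the two supporting results are in place. Rather, the genuine technical work is deferred to Theorem \ref{theorem:rp-one} in Section \ref{section:pmms-rpone}, which must establish a $\frac{5}{6}$ approximation for $\PMMS$ under restricted additive valuations while simultaneously ensuring that no agent is assigned a good she does not positively value. The role of Theorem \ref{theorem:pmms-red} is then to absorb the full cost of relaxing from range parameter $1$ to general $\gamma$ into a single multiplicative $\gamma$ factor on the approximation guarantee, which is what enables the clean $\frac{5}{6}\gamma$ bound claimed above.
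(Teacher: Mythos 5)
Your proposal is correct and matches the paper's own argument exactly: the paper also proves Theorem \ref{theorem:pmms} by composing the reduction of Theorem \ref{theorem:pmms-red} with the $\frac{5}{6}$-$\PMMS$ algorithm of Theorem \ref{theorem:rp-one}, using Lemma \ref{lemma:Gbar-acyclic} to supply the $A_i \subseteq P_i$ property needed for the second bullet of the reduction. Nothing is missing.
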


\section{Approximate $\PMMS$ with Range Parameter $1$}
\label{section:pmms-rpone}
This section develops a polynomial-time algorithm (Algorithm \ref{alg:PMMS}) that computes $\frac{5}{6}$-$\PMMS$ allocations for fair division instances with range parameter $1$. The following theorem is the main result of this section.
 \begin{theorem} \label{theorem:rp-one}
For any given fair division instance $\widetilde{\mathcal{I}}=\langle [n],[m],\{\widetilde{v}_i\}_i \rangle$, with range parameter $1$, we can compute a $\frac{5}{6}$-$\PMMS$ allocation in polynomial time. 
\end{theorem}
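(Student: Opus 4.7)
The plan is to produce a $\frac{5}{6}$-$\PMMS$ allocation with the additional structural property that $A_i \subseteq P_i$ for every agent $i$, so that the reduction of Theorem~\ref{theorem:pmms-red} applies. Under range parameter $1$, every value $v_i(g)$ is either $0$ or $\overline{v}(g)$, so for any pair $(i,j)$ the $\PMMS$ benchmark $\mu_i^{(2)}(A_i \cup A_j)$ depends only on the multiset of base values $\{\overline{v}(g) : g \in (A_i \cup A_j) \cap P_i\}$. I plan to exploit this combinatorial simplification throughout the analysis.

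The algorithm follows an envy-cycle-elimination paradigm (Section~\ref{section:envy-cycle}) adapted in two ways. First, the goods are processed one at a time in decreasing order of base value $\overline{v}(g)$, and the current good $g$ is always assigned to an agent in $N_g = \{i \in [n] : g \in P_i\}$; this preserves the invariant $A_i \subseteq P_i$. Second, in place of the standard envy graph I will maintain a $\PMMS$-adapted directed graph $\overline{\mathcal{G}}$ whose edges $i \to j$ record when agent $i$'s current share falls short of her $\frac{5}{6}$-$\PMMS$ entitlement against $A_j$. An analog of Lemma~\ref{lemma:cycle-eliminate} applied to $\overline{\mathcal{G}}$ keeps it acyclic via bundle rotations, and the good $g$ is then handed to a source of $\overline{\mathcal{G}}$ that lies in $N_g$. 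The rotations must be set up so that, when a cycle is resolved, the rotated bundles remain contained in the positive sets of their new owners.

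The correctness proof will be by induction on iterations, maintaining the three invariants $A_i \subseteq P_i$, $\frac{5}{6}$-$\PMMS$, and acyclicity of $\overline{\mathcal{G}}$. For pairs $(s, j)$ involving the recipient $s$ of the new good $g$, monotonicity of $v_s(A_s)$ together with the fact that $g$ has the smallest base value among all assigned goods handles the update. The delicate direction is the pair $(i, s)$ for $i \neq s$: here $v_i(A_i)$ does not change but the shared pool $A_i \cup A_s$ grows by $g$, so $\mu_i^{(2)}$ could increase. The source property of $s$ in $\overline{\mathcal{G}}$ bounds $v_i(A_s)$ from above, and the decreasing-base-value ordering forces every good in $A_s \cap P_i$ to have base value at least $\overline{v}(g)$; together these control the growth of the benchmark.

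The main obstacle is the tight worst case when $|A_s \cap P_i| = 2$, so that the pool $A_i \cup A_s + g$ contains exactly three goods from $i$'s positive set. Since all three have base value at least $\overline{v}(g)$, an optimal balanced bipartition of this three-element multiset drives the $\PMMS$ benchmark, and a short arithmetic computation on sums of the form $b_1 + b_2$ versus $b_3$ pins the attainable ratio at exactly $\frac{5}{6}$. The threshold used to define the edges of $\overline{\mathcal{G}}$ must be calibrated so that a source lying in $N_g$ is always available and so that this tight three-good configuration is precisely the bottleneck. Once these invariants are verified, polynomial-time termination is immediate, since each iteration assigns one good and cycle eliminations on $\overline{\mathcal{G}}$ run in polynomial time.
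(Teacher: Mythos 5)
Your high-level skeleton (process goods in decreasing order of base value, assign each good only to an agent who positively values it so that $A_i \subseteq P_i$ is maintained, and argue by induction over iterations) matches the paper, but the two load-bearing steps are not right. First, the graph machinery: you propose a $\PMMS$-deficiency graph whose cycles are resolved by bundle rotations, and you yourself flag that the rotations must preserve $A_i \subseteq P_i$ --- but there is no mechanism to guarantee this, since an agent receiving a rotated bundle may value some of its goods at zero, and either tolerating or repairing that breaks the containment invariant on which the whole argument rests. The paper avoids rotations entirely: it uses the \emph{ordinary} envy graph restricted to $\overline{P}(\widehat{g})$ and proves (Lemma~\ref{lemma:Gbar-acyclic}) that under range parameter $1$ and the invariant $A_i \subseteq P_i$, this graph is \emph{automatically acyclic}, because every envy edge $i \to j$ forces $\overline{v}(A_i) < \overline{v}(A_j)$ and these base-value inequalities cannot close into a cycle. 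A source lying in $\overline{P}(\widehat{g})$ therefore always exists with no reshuffling; this is the key observation your proposal is missing.

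Second, the bottleneck configuration is misidentified. A pool containing only three positively-valued goods never forces the ratio below $1$: when $|A_i| = 1$ the paper shows $\omut_i(A_i \cup A_s) \leq \tilv_i(A_i)$ outright, and the sub-cases with $|A_i| = 2$ and $|A'_s \cap P_i| \leq 1$ likewise give exact $\PMMS$. The factor $\frac{5}{6}$ is realized only in the five-good configuration $|A_i| = 2$ and $|A'_s \cap P_i| = 2$ (so that $A_i \cup A_s$ meets $P_i$ in five goods); the tight instance has base values $6,6,4,4,4$ with $\tilv_i(A_i) = 10$ against a pairwise maximin share of $12$. Your short arithmetic on $b_1 + b_2$ versus $b_3$ over a three-element multiset does not produce this bound, and without the correct case analysis --- which in the paper splits on $|A_i| \in \{0,1,2,\geq 3\}$, caps $\omut_i$ by the proportional share via Lemma~\ref{lemma:mu-prop}, and exploits that $\widehat{g}$ is the minimum-valued good in $A_i \cup (A_s \cap P_i)$ (Lemma~\ref{lemma:min-g-hat}) --- the claimed $\frac{5}{6}$ guarantee is not established.
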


 Throughout this section, we denote the given instance---with range parameter $1$---as $\widetilde{\mathcal{I}} = \langle [n], [m], \{ \widetilde{v}_i \}_i \rangle$. Here, the pairwise maximin share of agent $i$, for any subset of goods $S \subseteq [m]$, will be denoted as $\omut_i(S) \coloneqq \max_{Y \subseteq S} \ \min\{\widetilde{v}_i(Y), \widetilde{v}_i(S \setminus Y) \}$. Also, write  $\overline{P}(g) \coloneqq \{ i\in[n] \mid \widetilde{v}_i({g})>0\}$ as the set of agents that positively value good $g \in [m]$. Note that, for instance $\widetilde{\mathcal{I}}$, with range parameter equal to one, the base value of each good $g$ satisfies $\overline{v}(g) = \widetilde{v}_i (g)$ for all $i \in \overline{P}(g)$. Also, recall that $P_i$ denotes the set of goods positively valued by $i$, i.e., $P_i = \left\{ g \in [m] \mid \widetilde{v}_i(g) > 0 \right\}$.

Our $\PMMS$ algorithm (Algorithm \ref{alg:PMMS}) starts with empty bundles for the agents and assigns the goods iteratively. In each iteration, the algorithm selects, among the currently unassigned goods $U$, one with the maximum base value, i.e., it selects $\widehat{g}\in \argmax_{h\in U} \overline{v}(h)$. The algorithm then considers the envy-graph restricted to the agents who positively value $\widehat{g}$ and the partial allocation $\mathcal{A}$ at the start of the iteration. In particular, we consider the directed graph $\widetilde{\mathcal{G}}(\overline{P}(\widehat{g}),\mathcal{A})$ wherein a directed edge exists from vertex $i\in \overline{P}(\widehat{g})$ to vertex $j\in \overline{P}(\widehat{g})$ iff $\widetilde{v}_i(A_i) < \widetilde{v}_i(A_j)$. 

We will show that the graph $\widetilde{\mathcal{G}}(\overline{P}(\widehat{g}),\mathcal{A})$ is always acyclic (Lemma \ref{lemma:Gbar-acyclic}). That is, we can always find a source $s$ in the graph $\widetilde{\mathcal{G}}(\overline{P}(\widehat{g}),\mathcal{A})$, without having to resolve envy cycles in it. The algorithm assigns the good $\widehat{g}$ to the source agent $s \in \overline{P}(\widehat{g})$. Note that restricting the envy graph to $\overline{P}(\widehat{g})$ ensures that $\widehat{g}$ is only assigned to an agent who positively values it.\footnote{This design decision helps us in establishing the property $A_i\subseteq P_i$, which is required in Theorem \ref{theorem:pmms-red}.}

\begin{algorithm}
\textbf{Input:} A fair division instance $\widetilde{\mathcal{I}}=\langle[n], [m], \{\widetilde{v}_i\}_i\rangle$ with range parameter $1$.  \\ 
\textbf{Output:} A complete allocation $\mathcal{A}=(A_1,\ldots,A_n)$.
\begin{algorithmic}[1]
\STATE Initialize $A_i = \emptyset$, for each agent $i\in [n]$, and write $U = [m]$ to denote the set of unassigned goods.
\WHILE{$U\neq \emptyset$}
\STATE Select an unassigned good $\widehat{g}\in \argmax_{h\in U} \overline{v}(h)$, and consider the envy graph $\widetilde{\mathcal{G}}(\overline{P}(\widehat{g}), \mathcal{A})$. \label{line:acyclic-envy} 
\STATE Set $s\in \overline{P}(\widehat{g})$ to be a source vertex in $\widetilde{\mathcal{G}}(\overline{P}(\widehat{g}), \mathcal{A})$ \\
\COMMENT{$\widetilde{\mathcal{G}}(\overline{P}(\widehat{g}), \mathcal{A})$ is guaranteed to be acyclic (Lemma \ref{lemma:Gbar-acyclic})}
\STATE Update $A_s \gets A_s+\widehat{g}$ and $U\gets U-\widehat{g}$.
\ENDWHILE
\RETURN Allocation $\mathcal{A}=(A_1,\ldots,A_n)$.
\end{algorithmic}
\caption{Restricted allocation in decreasing order of base values} \label{alg:PMMS}
\end{algorithm}

\begin{lemma} 
\label{lemma:Gbar-acyclic}
Let $\widetilde{\mathcal{G}}(\overline{P}(\widehat{g}), \mathcal{A})$ be any envy-graph considered during the execution of Algorithm \ref{alg:PMMS} (in Line \ref{line:acyclic-envy}). Then, $\widetilde{\mathcal{G}}(\overline{P}(\widehat{g}), \mathcal{A})$ is acyclic. 
In addition, for any partial allocation $\alloc = (A_1, \ldots, A_n)$ considered in the algorithm, we have $A_i \subseteq P_i$, for all agents $i \in [n]$.  
\end{lemma}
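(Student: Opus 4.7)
The plan is to prove both assertions simultaneously by induction on the iteration count, maintaining the joint invariant that at the start of every iteration (i) $A_i \subseteq P_i$ for every agent $i \in [n]$, and (ii) the envy graph $\widetilde{\mathcal{G}}(\overline{P}(\widehat{g}), \mathcal{A})$ built in Line \ref{line:acyclic-envy} is acyclic. The base case is trivial since all bundles start empty. For the inductive step, the containment propagates easily: Algorithm \ref{alg:PMMS} chooses the source $s$ from $\overline{P}(\widehat{g})$, so $\widehat{g} \in P_s$ and hence $A_s + \widehat{g} \subseteq P_s$, while all other bundles remain unchanged.

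The substance lies in the acyclicity claim, and it follows from a clean observation enabled by the range parameter being $1$. In the instance $\widetilde{\mathcal{I}}$, each good $g$ takes value either $0$ or $\overline{v}(g)$ across agents, with $\widetilde{v}_i(g) = \overline{v}(g)$ precisely when $i \in \overline{P}(g)$. Combined with the invariant $A_j \subseteq P_j$, this yields, for any pair of agents $i,j \in [n]$,
\[
\widetilde{v}_i(A_j) \;=\; \sum_{g \in A_j \cap P_i} \overline{v}(g) \;\leq\; \sum_{g \in A_j} \overline{v}(g) \;=\; \widetilde{v}_j(A_j).
\]
Hence any envy edge $i \to j$ satisfies $\widetilde{v}_i(A_i) < \widetilde{v}_i(A_j) \leq \widetilde{v}_j(A_j)$, so every edge of the envy graph points from an agent with strictly smaller self-value to one with strictly larger self-value. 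Sorting the vertices of $\widetilde{\mathcal{G}}(\overline{P}(\widehat{g}), \mathcal{A})$ in increasing order of $\widetilde{v}_i(A_i)$ therefore gives a topological order, so the graph is a DAG and admits a source. Consequently, the algorithm can select a valid source $s \in \overline{P}(\widehat{g})$ in the current iteration, and the update $A_s \leftarrow A_s + \widehat{g}$ preserves both invariants into the next iteration.

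I do not expect a significant obstacle here; the only conceptual point is to notice that the range-parameter-one structure, together with $A_j \subseteq P_j$, is already strong enough to force the envy graph (even on all of $[n]$) to be acyclic, which is why no envy-cycle resolution is needed inside Algorithm \ref{alg:PMMS}. In particular, no auxiliary argument about the relative order in which goods are assigned to different agents is needed; the topological order by own-bundle value is inherited purely from the containment $A_j \subseteq P_j$ and the two-value ($\{0, \overline{v}(g)\}$) structure of each good under range parameter $1$.
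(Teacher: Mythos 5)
Your proof is correct and follows essentially the same route as the paper: both arguments maintain the joint invariant ($A_i \subseteq P_i$ plus acyclicity) by induction, and both establish acyclicity by showing that every envy edge $i \to j$ forces $\overline{v}(A_i) = \widetilde{v}_i(A_i) < \widetilde{v}_i(A_j) \leq \overline{v}(A_j) = \widetilde{v}_j(A_j)$, so the own-bundle value strictly increases along edges. The paper phrases this as a contradiction derived from a hypothetical cycle rather than as an explicit topological order, but the underlying observation is identical.
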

\begin{proof}
 We will inductively prove that the stated containments ($A_i \subseteq P_i$ for all agents $i$) and the acyclicity of the envy-graph $\mathcal{G}([n], \alloc)$ hold together as the algorithm iterates. Note that showing that the graph $\widetilde{\mathcal{G}}([n], \alloc)$ is acyclic suffices for the lemma, since this implies that any subgraph $\widetilde{\mathcal{G}}(\overline{P}(\widehat{g}), \mathcal{A})$ is also acyclic. 

For the base case of the inductive argument, note that the initial allocation comprises empty bundles. Hence, we have the stated containments and the acyclicity of the envy-graph over all the agents. 

Now, consider any iteration of Algorithm \ref{alg:PMMS}. Let $\mathcal{A}'=(A'_1,\ldots, A'_n)$ denote the partial allocation at the beginning of the iteration and $\mathcal{A}=(A_1,\ldots,A_n)$ denote the partial allocation at the end of it. By the induction hypothesis, we have that $A'_i\subseteq P_i$, for all agents $i$ and the graph $\mathcal{G}([n], \alloc')$ is acyclic. We will show that $\alloc$ continues to satisfy these properties, thereby completing the inductive argument. Note that $\alloc'$ is obtained by assigning good $\widehat{g}$ to a source agent $s \in \overline{P}(\widehat{g})$, i.e., $A_s = A'_s + \widehat{g}$ and $A_a = A'_a$, for all $a \neq s$. By definition of $\overline{P}(\widehat{g})$, we have $\widehat{g} \in P_s$. Hence, the desired containments continue to hold: $A_i \subseteq P_i$ for all agents $i$. 

We assume, towards a contradiction, that the assignment of $\widehat{g}$ creates a cycle, i.e., the graph $\mathcal{G}([n], \alloc)$ contains a cycle $a_1 \to a_2 \to \ldots \to a_k \to a_1$. 
The envy edges in the cycle imply $\widetilde{v}_{a_t} (A_{a_t}) < \widetilde{v}_{a_t}(A_{a_{t+1}})$, for all indices $t \in \{1, \ldots, k-1 \}$, and $\widetilde{v}_{a_k} (A_{a_k}) < \widetilde{v}_{a_k}(A_{a_{1}})$. Also, the containment $A_i \subseteq P_i$, for all $i$, gives us $\widetilde{v}_{a_t} (A_{a_t}) = \overline{v} (A_{a_t} )$ for all $t \in \{1, \ldots, k\}$. Further, given that the underlying instance has range parameter $1$, we have $\widetilde{v}_{a_t}(S) \leq \overline{v} (S)$ for any subset $S \subseteq [m]$; specifically, $\widetilde{v}_{a_t}(A_{a_{t+1}}) \leq \overline{v} (A_{a_{t+1}})$. For agents $a_1$ and $a_2$, in particular, the above-mentioned inequalities imply  
\begin{align*}
\overline{v} (A_{a_1}) & = \widetilde{v}_{a_1} (A_{a_1}) \tag{since $A_{a_1} \subseteq P_{a_1}$} \\
& < \widetilde{v}_{a_1} (A_{a_2}) \tag{$a_1$ envies $a_2$}\\
& \leq \overline{v} (A_{a_2}) \tag{by definition of $\widetilde{v}_{a_1}$}
\end{align*}
Using similar arguments we obtain $\overline{v} (A_{a_2}) < \overline{v} (A_{a_{3}})$, and so on, till $\overline{v} (A_{a_k}) < \overline{v} (A_{a_{1}} )$. These inequalities, however, lead to the contradiction $\overline{v} (A_{a_1}) < \overline{v} (A_{a_1})$.  Hence, by way of contradiction, we obtain that, for the updated allocation $\alloc$, the envy graph $\mathcal{G}([n], \alloc)$ is acyclic. Overall, under $\alloc$, we have both the containment and the acyclicity properties. This completes the induction step and establishes the lemma. 
\end{proof}

Lemma \ref{lemma:Gbar-acyclic} ensures that Algorithm \ref{alg:PMMS} executes successfully. Furthermore, one can directly verify that the algorithm terminates in polynomial time. Hence, the remainder of the section addresses the approximate $\PMMS$ guarantee achieved by Algorithm \ref{alg:PMMS}. Towards this, we first provide three supporting results (Lemmas \ref{lem:source}, \ref{lemma:min-g-hat}, and \ref{lemma:mu-prop}) and then prove Theorem \ref{theorem:rp-one} in Section \ref{subsection:proof-pmms}. 

The following lemma shows that, in each iteration of the algorithm, the approximate $\PMMS$ guarantee is preserved for the agent receiving the good. 
\begin{lemma} 
\label{lem:source}
Consider any iteration of Algorithm \ref{alg:PMMS} wherein agent $s$ receives the good $\widehat{g}$. Let $\alloc'=(A'_1, \ldots, A'_n)$ denote the partial allocation maintained by the algorithm at the beginning of the iteration and $\alloc=(A_1,\ldots, A_n)$ be the one at the end. If $\alpha$-$\PMMS$ guarantee held under allocation $\alloc'$ for agent $s$, then the guarantee also holds for $s$ under $\alloc$. That is, if $\widetilde{v}_s(A'_s) \geq \alpha \omut_s(A'_s + A'_j)$, for all $j \in [n]$, then $\widetilde{v}_s(A_s) \geq \alpha \omut_s(A_s + A_j)$.
\end{lemma}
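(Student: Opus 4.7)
The only change between $\alloc'$ and $\alloc$ is that $A_s = A'_s + \widehat{g}$, while every other bundle is unchanged. Fix any $j \in [n]$. If $j = s$ the statement is vacuous since $A_s + A_s = A_s$ and $\widetilde{v}_s(A_s) \geq \widetilde{v}_s(A'_s) \geq \alpha \omut_s(A'_s) = \alpha \omut_s(A_s + A_s)$ by monotonicity of additive valuations. So the real task is, for $j \neq s$, to relate $\omut_s(A_s + A_j) = \omut_s(A'_s + A'_j + \widehat{g})$ to $\omut_s(A'_s + A'_j)$, and then to invoke the inductive hypothesis $\widetilde{v}_s(A'_s) \geq \alpha \omut_s(A'_s + A'_j)$.

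\textbf{Key ingredient: a Lipschitz-type bound for $\omut_s$.} I would first establish the elementary fact that for any $S \subseteq [m]$ and any good $g \notin S$,
\begin{equation*}
\omut_s(S + g) \ \leq \ \omut_s(S) + \widetilde{v}_s(g).
\end{equation*}
To prove this, let $(Y^*, (S+g)\setminus Y^*)$ be an optimal bi-partition realizing $\omut_s(S+g)$, and assume without loss of generality that $g \in Y^*$. Consider the induced bi-partition $(Y^* - g,\, S \setminus (Y^* - g))$ of $S$; by definition of $\omut_s(S)$,
\begin{equation*}
\omut_s(S) \ \geq \ \min\bigl\{\widetilde{v}_s(Y^*)-\widetilde{v}_s(g),\ \widetilde{v}_s((S+g)\setminus Y^*)\bigr\}.
\end{equation*}
A short case analysis on which term achieves the minimum, together with $\omut_s(S+g) = \min\{\widetilde{v}_s(Y^*),\ \widetilde{v}_s((S+g)\setminus Y^*)\}$, yields $\omut_s(S+g) \leq \omut_s(S) + \widetilde{v}_s(g)$ in both cases.

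\textbf{Concluding the lemma.} Applying the bound with $S = A'_s + A'_j$ and $g = \widehat{g}$, and then using the inductive hypothesis and $\alpha \leq 1$, we get
\begin{align*}
\alpha\, \omut_s(A_s + A_j) \ &\leq\ \alpha\, \omut_s(A'_s + A'_j) + \alpha\, \widetilde{v}_s(\widehat{g}) \\
&\leq\ \widetilde{v}_s(A'_s) + \widetilde{v}_s(\widehat{g}) \ =\ \widetilde{v}_s(A_s),
\end{align*}
which is exactly the desired inequality.

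\textbf{Anticipated difficulty.} The only non-routine step is the Lipschitz-type bound on $\omut_s$; the case analysis is short but must be done carefully because adding $g$ to $Y^*$ can either keep $Y^*$ as the smaller side of the partition or flip which side is smaller. Once the bound is in hand, the rest is a one-line chain using the induction hypothesis and the assumption $\alpha \in (0,1]$.
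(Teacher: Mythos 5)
Your proposal is correct and takes essentially the same route as the paper: the paper's proof is exactly your Lipschitz-type bound $\omut_s(S+\widehat{g})\leq\omut_s(S)+\widetilde{v}_s(\widehat{g})$, obtained by comparing the optimal bi-partition of $A'_s+A'_j+\widehat{g}$ (with $\widehat{g}$ removed from its side) against the optimal bi-partition of $A'_s+A'_j$, followed by the same one-line chain using the induction hypothesis and $\alpha\leq 1$. The only difference is presentational—you factor the key inequality out as a standalone claim, while the paper inlines it.
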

\begin{proof}
In the considered iteration, good $\widehat{g}$ is assigned to agent $s$. Hence, $A_j = A'_j$ for all agents $j \neq s$ and $A_s = A'_s + \widehat{g}$. Fix any agent $j \in [n]$ and let $(Y_1, Y_2)$ be the partition of $A'_s + A'_j$ that induces $\widetilde{\mu}_s^{(2)}(A'_s + A'_j)$. Also, let $(X_1+\widehat{g}, X_2)$ be the partition that induces $\widetilde{\mu}_s^{(2)}(A_s+A_j) = \omut_s(A'_s + A'_j + \widehat{g})$. By these definitions, we get  
\begin{align*}
\alpha \ \widetilde{\mu}_s^{(2)}(A'_s+A'_j+\widehat{g}) &= \alpha\min\{\widetilde{v}_s(X_1+\widehat{g}), \widetilde{v}_s(X_2)\}\\
&\leq \alpha \min\{\widetilde{v}_s(X_1), \widetilde{v}_s(X_2)\} + \widetilde{v}_s(\widehat{g})\\
&\leq \alpha\min\{\widetilde{v}_s(Y_1), \widetilde{v}_s(Y_2)\} + \widetilde{v}_s(\widehat{g}) \tag{$Y_1 \cup Y_2 = X_1 \cup X_2$}\\
&=\alpha\widetilde{\mu}_s^{(2)}(A'_s+A'_j) + \widetilde{v}_s(\widehat{g}) \\
& \leq \widetilde{v}_s (A'_s) + \widetilde{v}_s(\widehat{g}) \tag{since $\alloc'$ is $\alpha$-$\PMMS$ for $s$} \\
& = \widetilde{v}_s (A_s). \tag{$A_s = A'_s + \widehat{g}$}
\end{align*} 
Therefore, allocation $\alloc$ is also $\alpha$-$\PMMS$ for agent $s$. 
\end{proof} 

We next show that, for each agent $i$ and among the positively valued goods $P_i$, the goods are assigned in decreasing order of value $\tilv_i(\cdot)$. 
\begin{lemma}
\label{lemma:min-g-hat}
Consider any iteration of Algorithm \ref{alg:PMMS}, wherein agent $s$ receives the good $\widehat{g}$, and let $\alloc=(A_1,\ldots, A_n)$ denote the partial allocation maintained by the algorithm at the end of the iteration. Then, for each agent $i \in [n]$, good $\widehat{g}$ is the least valued good in the set $A_i \cup (A_s \cap P_i)$, i.e., $\tilv_i(g) \geq \tilv_i(\widehat{g})$ for all goods $g \in A_i \cup (A_s \cap P_i)$.  
\end{lemma}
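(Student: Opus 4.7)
The plan is to exploit two structural facts already in hand: the greedy selection rule in Line~\ref{line:acyclic-envy} picks $\widehat{g}$ as a maximiser of $\overline{v}$ among unassigned goods, and Lemma~\ref{lemma:Gbar-acyclic} guarantees $A_i\subseteq P_i$ for every agent. Combined with the defining property of instance $\widetilde{\mathcal{I}}$ (range parameter equal to $1$), these facts force $\widetilde{v}_i(g)=\overline{v}(g)$ for every good that is either in $A_i$ or that lies in $P_i$ at all. From there the lemma will fall out of a base-value comparison, so I expect no conceptual difficulty.

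First I would fix an arbitrary agent $i\in[n]$ and split $A_i\cup(A_s\cap P_i)$ into two kinds of goods: those in $A_i$ and those in $A_s\cap P_i$. For a good $g\in A_i$, the induction-strengthened Lemma~\ref{lemma:Gbar-acyclic} gives $g\in P_i$, hence $\widetilde{v}_i(g)=\overline{v}(g)$. For a good $g\in A_s\cap P_i$, the containment $g\in P_i$ is immediate, again giving $\widetilde{v}_i(g)=\overline{v}(g)$. In both cases the good was assigned in the current iteration or in some earlier iteration, so at the moment of its assignment it was the argmax of $\overline{v}$ over the then-unassigned set $U$; since $\widehat{g}$ was still unassigned at that time (or equals $g$ itself), the selection rule yields $\overline{v}(g)\ge \overline{v}(\widehat{g})$.

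It then remains to bound $\widetilde{v}_i(\widehat{g})$ above by $\overline{v}(\widehat{g})$. Here I would simply note that, for an instance with range parameter $1$, the value $\widetilde{v}_i(\widehat{g})$ is either $\overline{v}(\widehat{g})$ (when $\widehat{g}\in P_i$) or $0$ (when $\widehat{g}\notin P_i$); in either case $\widetilde{v}_i(\widehat{g})\le\overline{v}(\widehat{g})$. Chaining the two inequalities gives $\widetilde{v}_i(g)=\overline{v}(g)\ge\overline{v}(\widehat{g})\ge\widetilde{v}_i(\widehat{g})$ for every $g\in A_i\cup(A_s\cap P_i)$, which is precisely the claim. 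The only step that needs any care is invoking Lemma~\ref{lemma:Gbar-acyclic} at the right moment of the algorithm's execution (i.e.\ on the partial allocation $\mathcal{A}$ referred to in the lemma statement), but since the containment $A_i\subseteq P_i$ is maintained as an invariant, this is routine.
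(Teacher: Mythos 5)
Your proposal is correct and follows essentially the same route as the paper's proof: use $A_i\subseteq P_i$ (Lemma~\ref{lemma:Gbar-acyclic}) and the range-parameter-$1$ property to equate $\widetilde{v}_i(g)$ with $\overline{v}(g)$ on the relevant goods, then invoke the decreasing-base-value assignment order to conclude $\widetilde{v}_i(g)=\overline{v}(g)\geq\overline{v}(\widehat{g})\geq\widetilde{v}_i(\widehat{g})$. No gaps; your explicit handling of the $g=\widehat{g}$ case is a minor point the paper leaves implicit.
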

\begin{proof}
Fix any agent $i \in [n]$ and recall that $A_i \subseteq P_i$ (Lemma \ref{lemma:Gbar-acyclic}). Hence, $\tilv_i(g)  = \basev(g)$, for all $g \in A_i$. This equality also holds for all goods in $A_s \cap P_i \subseteq P_i$.  In addition, considering the allocation order in Algorithm \ref{alg:PMMS}, we obtain $\basev(g) \geq \basev(\widehat{g})$ for all goods $g$ assigned before $\widehat{g}$.  That is, for each good $g \in A_i \cup (A_s \cap P_i)$, it holds that $\tilv_i(g)  = \basev(g) \geq \basev(\widehat{g}) \geq \tilv_i(\widehat{g})$. The lemma stands proved. 
\end{proof}

The following lemma states a known upper bound: the pairwise maximin share is at most the proportional share between any two agents. The proof of this lemma is standard and, hence, omitted. 
\begin{lemma}
\label{lemma:mu-prop}
For any (partial) allocation $\alloc = (A_1, \ldots, A_n)$ and any agent $i \in [n]$, we have 
$\widetilde{\mu}^{(2)}_i(A_i+A_j) \leq  \frac{1}{2} \Big( \widetilde{v}_i(A_i) + \widetilde{v}_i (A_j) \Big)$ for all $j \in [n]$.
\end{lemma}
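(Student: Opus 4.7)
The plan is to proceed directly from the definition of the pairwise maximin share together with additivity of the valuation $\widetilde{v}_i$ and the elementary inequality $\min\{a,b\} \leq (a+b)/2$. The proof is essentially a one-liner, so there is no genuine obstacle; the statement is included mainly to be invoked later in the analysis of Algorithm \ref{alg:PMMS}.

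Concretely, I would fix agents $i, j \in [n]$, let $(Y, (A_i+A_j)\setminus Y)$ be an optimal bi-partition of $A_i \cup A_j$ witnessing $\widetilde{\mu}^{(2)}_i(A_i+A_j)$, and write $Y_1 \coloneqq Y$ and $Y_2 \coloneqq (A_i+A_j)\setminus Y$. By additivity of $\widetilde{v}_i$, we have
\[
\widetilde{v}_i(Y_1) + \widetilde{v}_i(Y_2) \;=\; \widetilde{v}_i(A_i + A_j) \;=\; \widetilde{v}_i(A_i) + \widetilde{v}_i(A_j).
\]
Since the minimum of two nonnegative numbers is at most their arithmetic mean,
\[
\widetilde{\mu}^{(2)}_i(A_i+A_j) \;=\; \min\bigl\{\widetilde{v}_i(Y_1),\widetilde{v}_i(Y_2)\bigr\} \;\leq\; \tfrac{1}{2}\bigl(\widetilde{v}_i(Y_1) + \widetilde{v}_i(Y_2)\bigr) \;=\; \tfrac{1}{2}\bigl(\widetilde{v}_i(A_i) + \widetilde{v}_i(A_j)\bigr),
\]
which is exactly the claimed bound. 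No further case analysis or auxiliary construction is needed, and the argument does not use the range-parameter hypothesis, so it applies to any additive valuation.
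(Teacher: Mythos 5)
Your proof is correct and is exactly the standard argument the paper has in mind (the paper omits the proof as standard): take an optimal bi-partition, use additivity of $\widetilde{v}_i$ over the disjoint bundles $A_i$ and $A_j$, and bound the minimum by the average. Nothing further is needed.
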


We establish Theorem \ref{theorem:rp-one}, using the above-mentioned lemmas and via an involved case analysis, in the next subsection. 

\subsection{Proof of Theorem \ref{theorem:rp-one}}
\label{subsection:proof-pmms}

We prove, inductively, that in Algorithm \ref{alg:PMMS} each considered (partial) allocation is $\frac{5}{6}$-approximately $\PMMS$. For the base case of the induction, note that the initial allocation (comprising empty bundles) is $\PMMS$. 
 
Now, fix any iteration of Algorithm \ref{alg:PMMS}. Let good $\widehat{g}$ be assigned to agent $s$ in the iteration. Also, write $\mathcal{A}'=(A'_1, \ldots, A'_n)$ and $\mathcal{A}=(A_1, \ldots, A_n)$, respectively, to denote the partial allocations at the beginning and end of the iteration. Hence, $A_i = A'_i$ for all agents $i \neq s$ and $A_s = A'_s + \widehat{g}$.
 
By the induction hypothesis, $\alloc'$ is a $\frac{5}{6}$-$\PMMS$ allocation. We will establish the theorem by proving that the updated allocation $\alloc$ continues to be $\frac{5}{6}$-approximately $\PMMS$.
  
Note that, for any pair of agents $i,j \neq s$, the $\frac{5}{6}$-$\PMMS$ guarantee continues to hold under $\alloc$, since, for such agents, $A_i = A'_i$ and $A_j = A'_j$. The guarantee also holds for agent $s$, against any other agent $j \in [n]$; see Lemma \ref{lem:source}. Hence, to show that $\alloc=(A_1,\ldots,A_n)$ is overall a $\frac{5}{6}$-$\PMMS$ allocation, it remains to address the guarantee for agents $i \in [n]$ against agent $s$. That is, in the remainder of the proof we will show that, for each $i \in [n]$:  
\begin{align}
\widetilde{v}_i(A_i) \geq \frac{5}{6} \ \omut_i(A_i \cup A_s) \label{ineq:desired-pmms}
\end{align} 
We can further assume that $i \in \overline{P}(\widehat{g})$ (i.e., we have $\widetilde{v}_i(\widehat{g}) >0$). Otherwise, if $\widetilde{v}_i(\widehat{g}) =0$, then $\omut_i(A_i + A_s) = \omut_i(A'_i + A'_s)$; note that inclusion of zero-valued goods does not change the share $\omut_i(\cdot)$. Hence, for any $i \notin \overline{P}(\widehat{g})$, using the fact that the starting allocation $\alloc'$ is $\frac{5}{6}$-$\PMMS$ (induction hypothesis), we directly obtain the desired inequality .

To establish inequality (\ref{ineq:desired-pmms}) for agents $i \in \overline{P}(\widehat{g})$, we perform a case analysis, considering the cardinality of $A_i$. We will consider the following four exhaustive cases: $|A_i|=0$, $|A_i|=1$, $|A_i|=2$, and $|A_i| \geq 3$ in the following subsections, respectively. 

\subsubsection{Case $|A_i|=0$}   

Agent $s$ was selected as a source in the envy graph $\widetilde{G}(\overline{P}(\widehat{g}), \alloc')$. Since agent $i \in \overline{P}(\widehat{g})$, it must be that, under the starting allocation $\alloc'=(A'_1, \ldots, A'_n)$, agent $i$ does not envy agent $s$. That is, in the current case ($|A'_i| = |A_i|  = 0$), we have $|A'_s \cap P_i| = 0$.  In particular, $\left(A_i \cup A_s\right) \cap P_i = \{ \widehat{g} \}$ and, hence, $\omut_i(A_i \cup A_s) = 0$. Therefore, in Case \texttt{1} the desired inequality (\ref{ineq:desired-pmms}) holds. 

\subsubsection{Case $|A_i|=1$}  
Here, write $g_i$ to denote the good in agent $i$'s bundle, $A'_i = A_i = \{g_i\}$. We will show that, in the current case, the share $\omut_i(A_i \cup A_s)$ is at most $i$'s value of the good $g_i$, i.e., is at most $\tilv_i(A_i)$. 

Assume, towards a contradiction, that $\omut_i(A_s \cup A_i) > \tilv_i(g_i)$. Consider any bi-partition $\left(X, Y \right)$ that induces $\omut_i(A_s \cup A_i)$; here, $Y = (A_s \cup A_i) \setminus X$ and $\min\{ \tilv_i(X), \tilv_i(Y) \} = \omut_i(A_i \cup A_s)$. In the bi-partition, let $X$ be the set that contains $g_i$. 

Since $A_i \cup A_s = A_s + g_i$ and $\omut_i(A_s \cup A_i) > \tilv_i(g_i)$, subset $X$ (along with $g_i$) must contain at least one more good $g' \in A_s \cap P_i$. Hence, $\left(Y \cap P_i \right) \subseteq \left(A_s \cap P_i \right) - g'$ and we obtain 
\begin{align}
\tilv_i(Y) \leq \tilv_i(A_s - g') \leq \tilv_i(A_s - \widehat{g}) = \tilv_i(A'_s) \leq \tilv_i(A'_i) = \tilv_i(A_i) \label{ineq:toomuch}
\end{align}
The penultimate inequality follows from fact that $\widehat{g}$ is the lowest-valued good in the set $A_s \cap P_i$ (Lemma \ref{lemma:min-g-hat}) and the last inequality holds since $i$ does not envy $s$ under $\alloc'$. Inequality (\ref{ineq:toomuch}), however, contradicts the assumption that the share $\omut_i(A_i \cup A_s) =  \min\{ \tilv_i(X), \tilv_i(Y) \}$ is strictly greater than $\tilv_i(A_i)$. Hence, in the current case, we have $\tilv_i(A_i) \geq \omut_i(A_i \cup A_s)$ and the desired inequality (\ref{ineq:desired-pmms}) holds.

\subsubsection{Case $|A_i|=2$} 
Lemma \ref{lemma:min-g-hat} gives us $\tilv_i(g) \geq \tilv_i(\widehat{g})$ for all goods $g \in A'_s \cap P_i$. Therefore, the average value within $A'_s \cap P_i$ satisfies $\frac{1}{|A'_s\cap P_i|} \tilv_i(A'_s\cap P_i) \geq \tilv_i(\widehat{g})$. Hence, if $|A'_s\cap P_i| \geq 3$, then 
\begin{align*}
    \tilv_i(\widehat{g}) &\leq \frac{1}{3} \tilv_i(A'_s\cap P_i)\\
    & \leq \frac{1}{3} \tilv_i(A'_i)  \tag{$i$ does not envy $s$ under $\mathcal{A}'$}\\
    &= \frac{1}{3} \tilv_i(A_i) \tag{since $A_i=A'_i$}
\end{align*} 

Using this inequality, we upper bound $\tilv_i(A_s)$,  for $|A'_s\cap P_i| \geq 3$, as follows 
\begin{align}
\tilv_i(A_s) & = \tilv_i(A'_s) + \tilv_i(\widehat{g}) \nonumber \\
& \leq \tilv_i(A'_s) + \frac{1}{3} \tilv_i(A_i) \nonumber \\ 
& \leq \tilv_i(A_i) + \frac{1}{3} \tilv_i(A_i) \tag{$i$ does not envy $s$ under $\alloc'$} \\
& = \frac{4}{3} \tilv_i (A_i) \label{ineq:high-avg_1}
\end{align}
Furthermore, 
\begin{align}
\omut_i(A_i \cup A_s) & \leq \frac{1}{2}\left( \tilv_i (A_s) + \tilv_i (A_i) \right) \tag{via Lemma \ref{lemma:mu-prop}} \\
& \leq \frac{1}{2} \left(  \frac{4}{3} \tilv_i (A_i) + \tilv_i (A_i) \right) \tag{via (\ref{ineq:high-avg_1})}  \\
& = \frac{7}{6} \tilv_i (A_i) \label{ineq:sixseven_1}
\end{align}
Inequality (\ref{ineq:sixseven_1}) reduces to $\tilv_i (A_i) \geq \frac{6}{7} \ \omut_i(A_i \cup A_s)$ and, hence, the desired bound (\ref{ineq:desired-pmms}) holds whenever $|A_s\cap P_i|\geq 3$. Therefore, in the current case, it remains to consider the following three sub-cases: $|A'_s\cap P_i|=0$, or $|A'_s\cap P_i|=1$, or $|A'_s\cap P_i|= 2$. Inequality (\ref{ineq:desired-pmms}) is established for these three sub-cases below. \\

\noindent
{\bf Sub-Case: $|A_i|=2$ and $|A'_s \cap P_i| = 0$.} \\
Here, $A_s \cap P_i = \{ \widehat{g} \}$ and Lemma  \ref{lemma:min-g-hat} give us $\tilv_i(g) \geq \tilv_i(\widehat{g})$ for all $g \in A_i$. In particular, $\tilv_i(A_i) \geq \tilv_i(\widehat{g}) = \tilv_i(A_s)$. Using this inequality and Lemma \ref{lemma:mu-prop}, we obtain $\omut_i(A_i \cup A_s) \leq \frac{1}{2}\left(\tilv_i(A_s) + \tilv_i(A_i)  \right) \leq \tilv_i(A_i)$. Hence, the desired inequality (\ref{ineq:desired-pmms}) holds in this sub-case. \\

\noindent
{\bf Sub-Case: $|A_i|=2$ and $|A'_s \cap P_i| = 1$.} \\
Write $g_1$ and $g_2$ to denote the goods in the set $A_i$. Also, let $g'$ denote the good that constitutes $A'_s \cap P_i$. Consider any bi-partition $(X,Y)$ that induces $\mut_i(A_i + A_s)$, i.e.,  $\min\{ \tilv_i(X), \tilv_i(Y) \} = \omut_i(A_i + A_s)$ and $X \cup Y = A_i \cup A_s$. In the bi-partition, let $X$ be the subset that contains $g'$. 

Note that if $X\cap P_i$ is a singleton, then $\tilv_i(X) = \tilv_i(X \cap P_i) \leq \tilv_i(A_i)$, since agent $i$ does not envy agent $s$ under allocation $\alloc'$. That is, if $|X \cap P_i|=1$, then $\mut_i(A_i + A_s) \leq \tilv_i(A_i)$. 

Otherwise, if $|X \cap P_i|\geq 2$, then $Y \cap P_i$ is a strict subset of $\{g_1, g_2, \widehat{g}\}$. Lemma \ref{lemma:min-g-hat} gives us $\tilv_i(g_1) \geq \tilv_i(\widehat{g})$ and $\tilv_i(g_2) \geq \tilv_i(\widehat{g})$. These inequalities imply that for $\left(Y\cap P_i\right) \subsetneq \{g_1, g_2, \widehat{g}\}$ it must hold that $\tilv_i(Y) = \tilv_i(Y\cap P_i) \leq \tilv_i(\{g_1, g_2\})= \tilv_i(A_i)$. Hence, again, we obtain $\min\{ \tilv_i(X), \tilv_i(Y) \} \leq \tilv_i(A_i)$. 

Therefore, in the current sub-case as well, the desired inequality (\ref{ineq:desired-pmms}) holds. \\

\noindent
{\bf Sub-Case: $|A_i|=2$ and $|A'_s \cap P_i| = 2$.} \\
Write $\ell$ to denote the largest-valued (according to $\tilv_i$) good in the set $(A_i\cup A_s)\cap P_i$. Good $\ell$ can be contained in $A_i$ or in $A_s \cap P_i$. Lemma \ref{lemma:min-g-hat} gives us that $\widehat{g}$ is the least-valued good in the set $(A_i\cup A_s)\cap P_i$. Here, we will first establish the following bound
\begin{align}
\tilv_i(\ell) + \tilv_i(\widehat{g}) \leq \tilv_i(A_i) \label{ineq:max-min}
\end{align}
Towards this, note that, if the good $\ell$ is contained in the set $A_i$, then, using the facts that $\widehat{g}$ is the least-valued good under consideration and $|A_i|=2$, we obtain inequality (\ref{ineq:max-min}): $\tilv_i(\ell) + \tilv_i(\widehat{g}) \leq \tilv_i(A_i)$. Even otherwise, if $\ell \in A'_s$, then, using the fact that agent $i$ does not envy $s$ under $\alloc'$ and $|A'_s \cap P_i|=2$, we get $\tilv_i(A_i) \geq \tilv_i(A'_s) \geq \tilv_i(\ell) + \tilv_i(\widehat{g})$. Hence, again, inequality (\ref{ineq:max-min}) holds.

We now show that the desired inequality (\ref{ineq:desired-pmms}) holds in the current sub-case. 

Note that, if $\widetilde{v}_i(\widehat{g})< 0.4\widetilde{v}_i(A_i)$, then we can upper bound $\widetilde{v}_i(A_s)$ as follows $\widetilde{v}_i(A_s) = \widetilde{v}_i(A'_s) +\tilv_i(\widehat{g}) \leq \tilv_i(A_i) + \tilv_i(\widehat{g}) \leq 1.4 \tilv_i(A_i)$. Furthermore, Lemma \ref{lemma:mu-prop} gives us  
\begin{align}
\widetilde{\mu}^{(2)}_i(A_i+A_s) \leq  \frac{1}{2} \left(\tilv_i (A_i) + \tilv_i (A_s) \right) \leq \frac{1}{2} \left(\tilv_i (A_i) + 1.4  \tilv_i (A_i) \right) = 1.2 \tilv_i (A_i) \label{ineq:maze}
\end{align}
Inequality (\ref{ineq:maze}) is equivalent to $\tilv_i (A_i) \geq \frac{5}{6} \widetilde{\mu}^{(2)}_i(A_i+A_s)$. That is, if $\widetilde{v}_i(\widehat{g})< 0.4\widetilde{v}_i(A_i)$, then the desired inequality holds. 

Otherwise, if $\widetilde{v}_i(\widehat{g}) \geq 0.4\widetilde{v}_i(A_i)$, then, via inequality (\ref{ineq:max-min}), we obtain $\tilv_i(\ell) \leq 0.6 \tilv_i (A_i)$. Note that, in the current sub-case, $|A_i \cup (A_s \cap P_i)| = 5$. Hence, for any bi-partition $(X,Y)$ of $A_i \cup A_s$ it holds that $\min \{ |X \cap P_i|, |Y \cap P_i| \} \leq 2$. Since $\ell$ is the largest-valued good (among the five under consideration), we have 
\begin{align}
\min \{ \tilv_i(X), \tilv_i(Y) \} \leq  \min \left\{ |X \cap P_i| \ \tilv_i(\ell), \ \ |Y \cap P_i| \ \tilv_i(\ell) \right\} \leq 2 \tilv_i(\ell) \leq 1.2 \ \tilv_i (A_i) \label{ineq:nest}
\end{align}
Inequality (\ref{ineq:nest}) holds for any bi-partition of $A_i \cup A_s$ and, hence, $\tilv_i (A_i) \geq \frac{5}{6} \omut_i(A_i \cup A_s)$.

Therefore, the inequality (\ref{ineq:desired-pmms}) is satisfied in the present sub-case as well. 

\subsubsection{Case $|A_i| \geq 3$}
Lemma \ref{lemma:min-g-hat} gives us $\tilv_i(g) \geq \tilv_i(\widehat{g})$ for all goods $g \in A_i$. Therefore, the average value within $A_i$ satisfies $\frac{1}{|A_i|} \tilv_i(A_i) \geq \tilv_i(\widehat{g})$. Using this inequality, we upper bound $\tilv_i(A_s)$ as follows 
\begin{align}
\tilv_i(A_s) & = \tilv_i(A'_s) + \tilv_i(\widehat{g}) \nonumber \\
& \leq \tilv_i(A'_s) + \frac{1}{|A_i|} \tilv_i(A_i) \nonumber \\ 
& \leq \tilv_i(A_i) + \frac{1}{|A_i|} \tilv_i(A_i) \tag{$i$ does not envy $s$ under $\alloc'$} \\
& = \frac{\left(|A_i| + 1\right)}{|A_i|}  \tilv_i (A_i) \label{ineq:high-avg}
\end{align}
Furthermore, 
\begin{align}
\omut_i(A_i \cup A_s) & \leq \frac{1}{2} \left( \tilv_i (A_s) +   \tilv_i (A_i) \right) \tag{via Lemma \ref{lemma:mu-prop}} \\
& \leq \frac{1}{2} \left( \frac{\left(|A_i| + 1\right)}{ |A_i|}  \  \tilv_i (A_i) \ + \   \tilv_i (A_i) \right) \tag{via (\ref{ineq:high-avg})}  \\
& = \left( 1 + \frac{1}{2|A_i|} \right) \ \tilv_i (A_i) \label{ineq:sixseven}
\end{align}
Inequality (\ref{ineq:sixseven}) reduces to $\tilv_i (A_i) \geq \left( \frac{2|A_i|}{2|A_i|+1} \right) \ \omut_i(A_i \cup A_s)$. Note that, since $|A_i| \geq 3$ (in the current case), the factor $\left( \frac{2|A_i|}{2|A_i|+1} \right) \geq \frac{6}{7}$. Therefore, the desired bound (\ref{ineq:desired-pmms}) holds in the current case as well. 

The theorem stands proved.

\begin{remark}
Appendix \ref{appendix:pmms-example} shows that the analysis for our $\PMMS$ algorithm for $\gamma=1$ is tight. 
\end{remark}
\section{Conclusion and Future Work}
The current work establishes approximation guarantees for $\EFx$, $\tEFx$, and $\PMMS$ criteria in terms of range parameter $\gamma$ of fair division instances. In particular, we develop an efficient algorithm for computing $\left( \frac{2\gamma}{\sqrt{5+4\gamma}-1}\right)$-$\EFx$ allocations. The work also proves that $\frac{5}{6} \gamma$-$\PMMS$ allocations and $2\gamma$-$\tEFx$ allocations can be computed in polynomial time. 

For these approximation factors, improving the dependence on $\gamma$ is a relevant direction for future work. Indeed, this research direction provides a quantitative route to progress toward the $\EFx$ conjecture. In particular, it would be interesting to establish the existence of exact $\EFx$ allocation for instances in which, say, $\gamma$ is at least $1/2$. Another interesting specific question here is to study the existence of exact maximin share ($\MMS$) allocations for restricted additive valuations. Notably, recent (non)examples, which show that $\MMS$ allocations do not always exist, have $\gamma =0.9$ \cite{feige2021tight}. In this range-parameter framework, studying the tradeoffs between fairness and welfare is another relevant direction of future work.

\bibliographystyle{alpha}
\bibliography{references}

\appendix 
\section{Tight Example for $\labase$ (Algorithm \ref{alg:EFx})}
\label{appendix:EFx-tight}

This section shows that the approximation guarantee established for our algorithm $\labase$ is tight: for any $\gamma >0$, there is an instance for which $\labase$ outputs a $\left(\frac{2\gamma}{\sqrt{5+4\gamma}-1}\right)$-$\EFx$ allocation.

Consider the following instance with three agents $\{1,2,3\}$ and five goods $\{g_1,g_2,\ldots,g_5\}$.

\begin{table}[H]
\centering
\begin{tabular}{c|ccccc}

 & $g_1$ & $g_2$ & $g_3$ & $g_4$ & $g_5$ \\
\hline
1 & $\frac{10}{\sqrt{\gamma}\eta}$ & $\frac{1}{\sqrt{\gamma}}\left(1+\frac{1}{\eta}\right)$ & $\frac{1}{\sqrt{\gamma}}$ & $\frac{1}{\sqrt{\gamma}\eta}$ & $\frac{\sqrt{\gamma}}{\eta}$ \\
2 & 0 & $\sqrt{\gamma}\left(1+\frac{1}{\eta}\right)$ & $\sqrt{\gamma}$ & $\frac{\sqrt{\gamma}}{\eta}$ & $\frac{1}{\sqrt{\gamma}\eta}$\\
3 & 0 & 0 & 1 & $\frac{\sqrt{\gamma}}{\eta}$ & 0
\end{tabular}
\caption{Tight instance for $\labase$.}
 \label{tab::repset}
 \end{table}
Here, the base values of the goods are as follows.

\begin{table}[H]
\centering
\begin{tabular}{c|ccccc}
& $g_1$ & $g_2$ & $g_3$ & $g_4$ & $g_5$  \\
\hline
$\overline{v}$ & $\frac{10}{\sqrt{\gamma}\eta}$ & $1+\frac{1}{\eta}$ & 1 & $\frac{1}{\eta}$ & $\frac{1}{\eta}$ 
\end{tabular}
\caption{Base values of the goods.}
\label{tab:Base values}
\end{table}

The step-by-step execution of $\labase$ on the above instance is detailed next. Initially, $U = \{g_1,g_2,\ldots, g_5\}$, $A_i = \emptyset$, for all $i\in [3]$, and $Z=\{1,2,3\}$.
\begin{enumerate}
    \item In the first iteration, $\argmax_{g\in U} \overline{v}(g) = \{g_1\}$ and $v_1(g_1)>0$. Hence, the if-block (Line \ref{Line:If}) is executed with $i=1$ and $\widehat{g}=g_1$. Since $\argmax_{g\in U} v_1(g) = \{g_1\}$, agent $1$ receives $g_1$ in this iteration. We now have $U = \{g_2,g_3,\ldots,g_5\}$.
    \item In the second iteration, $\argmax_{g\in U} \overline{v} ({g}) = \{g_2\}$ and $v_2(g_2)>0$. Recall that, $\eta$ is selected to satisfy $\eta^2+\eta=1+\gamma > 1$. Hence, we have $\eta(1+\frac{1}{\eta}) = \eta + 1 > \frac{1}{\eta}$.  Therefore, the current $U_{\eta}$ consists only of the good $g_2$. Hence, in the if-block (Line \ref{line:if-block}), good $g_2$ is allocated to agent $2$ and we get $U = \{g_3,g_4,g_5\}$.
    \item In the next iteration, $\argmax_{g\in U} \overline{v}(g) = \{g_4,g_5\}$, assume that $g_4$ is chosen in Line \ref{line:selectHatg} of Algorithm \ref{alg:EFx}. Note that $v_3(g_4) > 0$ and $U_{\eta} = \{g_3,g_4,g_5\}$. 
    Recall that $\eta =  \frac{\sqrt{5+4\gamma}-1}{2}$ and, hence, $v_3(g_3)=1>\frac{\sqrt{\gamma}}{\eta}= v_3(g_4)$. Therefore, the good $g_3$ is allocated to agent $3$ and we obtain $U = \{g_4,g_5\}$. Notice that at the end of this iteration each agent has a non-empty bundle, i.e., $Z=\emptyset$. Hence, the if-condition (Line \ref{Line:If}) in Algorithm \ref{alg:EFx} will not execute in any of the subsequent iterations. Also, at this point, the maintained partial allocation is envy-free.
    \item Next, $\argmax_{g\in U} \overline{v}(g) = \{g_4, g_5\}$. Let the algorithm select $\widehat{g}=g_4$ and assign it to agent $3$, who is chosen as the source. Note that, even after this allocation, agent $3$ is not envied by anyone and the current set of unassigned goods is $U= \{g_5\}$.
    \item Finally, in the last iteration, let agent $3$, who is chosen as the source again, receive $g_5$. 
\end{enumerate}

Overall, $\labase$, when executed on the given instance terminates with the following allocation.
\begin{align*}
    A_1 &= \{g_1\} \\
    A_2 &= \{g_2\} \\
    A_3 &= \{g_3,g_4,g_5\} 
\end{align*}
In the returned allocation, agent $2$ envies agent $3$. Moreover, $v_2(g_3) >0$ and 
\begin{align*}
\frac{v_2(A_2)}{v_2(A_3-g_3)} = \frac{v_2(g_2)}{v_2(g_4)+v_2(g_5)} = \frac{\sqrt{\gamma}(1+\frac{1}{\eta})}{\frac{\sqrt{\gamma}}{\eta}+\frac{1}{\sqrt{\gamma}\eta}} = \frac{\gamma(\eta+1)}{\gamma+1} = \frac{2\gamma}{\sqrt{5+4\gamma}-1}.
\end{align*}
The last equality follows by the choice of $\eta$. 

Therefore, our algorithm $\labase$ , when executed on the above instance outputs a $\left(\frac{2\gamma}{\sqrt{5+4\gamma}-1}\right)$-approximate EFx allocation.
\section{Tight Example for Theorem \ref{theorem:rp-one}}
\label{appendix:pmms-example}

This section shows that there exist instances for which our $\PMMS$ algorithm for $\gamma=1$ achieves the approximation factor of $\frac{5}{6}$ (as stated in Theorem \ref{theorem:rp-one}).

Consider an instance with two agents ($n=2$) who have identical, additive valuations for five goods ($m=5$). The values for the five goods are $6$, $6$, $4$, $4$, and $4$. Since the agents have identical valuations, here the range parameter $\gamma =1$. Also, the $\PMMS$ algorithm, for $\gamma=1$, will return an allocation in which one of the bundles contains two goods having values $6$ and $4$. The other bundle will contain the remaining three goods with values $6$, $4$, and $4$, respectively. Hence, for one of the agents the assigned bundle has value $6+4 = 10$; the other agent receives $6+4+4=14$. However, here the $1$-out-of-$2$ maximin share (across the five goods) is $12$. Hence, for the agent with the bundle of value $10$, the $\PMMS$ requirement is off by a multiplicative factor of $\frac{10}{12} = \frac{5}{6}$.  

\section{Scaling Agents' Valuations to Improve $\gamma$}
\label{appendix:optimal-scaling}

As mentioned previously, an $\alpha$-$\EFx$ allocation (or an $\alpha$-$\PMMS$ allocation) continues to be $\alpha$-approximately fair even if one scales the agents' valuations, i.e., sets $v_i(g) \leftarrow s_i \ v_i(g)$, for all agents $i$ and goods $g$, with agent-specific factors $s_i >0$. By contrast, such a scaling can change the parameter $\gamma$.  

This section shows that, for any given fair division instance, we can efficiently find (up to an arbitrary precision) agent-specific scaling factors that induce the maximum possible range parameter. In particular, for any given instance, write $\gamma^* \in (0,1]$ to denote the maximum value of the range parameter achievable across all nonnegative scaling factors. Also, let $s^*_1, s^*_2, \ldots, s^*_n >0$ denote a collection of agent-specific factors that induce $\gamma^*$. In particular, for every pair of agents $i, j \in [n]$ and each good $g \in P_i \cap P_j$ the following inequality holds:\footnote{Recall that the set $P_i$ denotes the set of goods that are positively valued by agent $i \in [n]$.} $s^*_j \ v_j(g) \geq \gamma^* \ s^*_i \ v_i(g)$. 

We can assume, without loss of generality, that these positive factors are at most $1$; we can set $s^*_i \leftarrow \frac{s^*_i}{\max_j s^*_j}$ and continue to have the range parameter as $\gamma^*$. In addition, we can find a positive lower bound $\beta > 0$ of polynomial bit-complexity such that $s^*_j \geq \beta$ for all $j \in [n]$. Such a lower bound on the factors follows from the considering the inequalities $s^*_j v_j(g) \geq \gamma^* \ s^*_i v_i(g)$, for all $i, j \in [n]$ and $g \in P_i \cap P_j$, and the fact that $\gamma^*$ is at least the range parameter of the given (unscaled) instance. Hence, for all $i \in [n]$, the following bounds hold: $\beta \leq s^*_i \leq 1$.  

Noting the inequalities satisfied by $s^*_i$-s, we will formulate linear programs. Specifically, given a fair division instance, $\langle [n],[m],\{v_i\}_i \rangle$, and for particular values of $\gamma \in (0,1]$, we will consider the following linear programs, LP($\gamma$), in decision variables $s_1, s_2, \ldots, s_n$. 
\begin{align*}
\max & \ \ \ 0 \qquad \text{s.t.} \tag{LP($\gamma$)} \\ 
s_j v_j(g) & \geq \gamma \  s_i v_i(g) &  \text{for all $i,j \in [n]$ and all $g \in P_i \cap P_j$} \\
\beta & \leq s_i \leq 1 & \text{ for all } i \in [n].
\end{align*}

For any fixed $\gamma$, note that LP($\gamma$) is indeed a linear program and is bounded. Also, the fact that $\beta >0$ ensures that the feasible solutions of LP($\gamma$)---if they exist---are necessarily positive. The following two lemmas lead to an efficient method for finding (up to an arbitrary precision) the optimal factors $s^*_i$-s and, hence, the optimal parameter $\gamma^*$. 

\begin{lemma}
\label{lemma:LPrange}
The linear program LP($\gamma$) is feasible for all $\gamma \in (0, \gamma^*]$.
\end{lemma}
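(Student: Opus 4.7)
The plan is to exhibit a feasible solution to LP($\gamma$) directly, using the optimal scaling factors $s^*_1, \ldots, s^*_n$ described in the setup preceding the lemma. Recall that $\gamma^*$ is defined as the \emph{maximum} range parameter achievable over all nonnegative scalings, so there is at least one collection of factors $s^*_i > 0$ realizing it. The paragraph preceding the lemma already normalizes these to $\max_i s^*_i = 1$ and establishes a uniform lower bound $\beta > 0$ (of polynomial bit-complexity) such that $s^*_i \geq \beta$ for every $i \in [n]$.

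Given this setup, the key step is almost immediate: for the scaled instance with valuations $s^*_i v_i(\cdot)$, the definition of the range parameter forces
\[
s^*_j \ v_j(g) \ \geq \ \gamma^* \ s^*_i \ v_i(g) \qquad \text{for all } i,j \in [n] \text{ and } g \in P_i \cap P_j.
\]
For any $\gamma \in (0, \gamma^*]$, the inequality $\gamma \leq \gamma^*$ immediately gives $s^*_j v_j(g) \geq \gamma \ s^*_i v_i(g)$, so the main family of constraints of LP($\gamma$) is satisfied. Combined with $\beta \leq s^*_i \leq 1$, the vector $(s^*_1, \ldots, s^*_n)$ is feasible for LP($\gamma$).

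The only subtle point I expect to confront is justifying that $\gamma^*$ is genuinely \emph{attained} by some concrete scaling, rather than being a supremum approached in the limit. This is where a small compactness argument is needed: after fixing the normalization $\max_i s_i = 1$ and the lower bound $s_i \geq \beta$, the admissible scalings form a compact subset of $\mathbb{R}^n_{>0}$, and the induced range parameter is a continuous (piecewise-rational minimum over finitely many pairs $(i,j,g)$) function of the $s_i$. Hence the maximum is achieved, and the argument above yields feasibility of LP($\gamma$) throughout $(0, \gamma^*]$.
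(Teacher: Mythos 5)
Your proof is correct and is essentially identical to the paper's: both exhibit the optimal factors $s^*_1,\ldots,s^*_n$ as a feasible point of LP($\gamma$), using $s^*_j v_j(g) \geq \gamma^* s^*_i v_i(g) \geq \gamma\, s^*_i v_i(g)$ together with the bounds $\beta \leq s^*_i \leq 1$. The extra compactness remark justifying that $\gamma^*$ is attained is a reasonable point of hygiene that the paper leaves implicit, but it does not change the argument.
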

\begin{proof}
For all $i, j \in [n]$ and $g \in P_i \cap P_j$, the optimal scaling factors $s^*_i$-s satisfy 
\begin{align*}
s^*_j \ v_j(g) \geq \gamma^* \ s^*_i \ v_i(g) \geq \gamma \ s^*_i v_i(g).
\end{align*}
The last inequality follows from the lemma assumption that $\gamma \leq \gamma^*$. In addition, we have $\beta \leq s^*_i \leq 1$ for all $i$. Hence, $s^*_i$-s constitute a feasible solution of LP($\gamma$). That is, as stated, LP($\gamma$) is feasible for all $0 < \gamma \leq \gamma^*$.
\end{proof}

\begin{lemma}
\label{lemma:feasibleLP}
For any given fair division instance $\langle [n],[m],\{v_i\}_{i=1}^n \rangle$ and any $\gamma \in (0,1]$, let $s_1, \ldots, s_n$ be a feasible solution for the linear program LP($\gamma$). Then, scaling each agent $i$'s valuation by $s_i$ provides an instance with range parameter at least $\gamma$. 
\end{lemma}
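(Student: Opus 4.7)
The plan is to unpack the definition of the range parameter in the scaled instance and show that each constraint of LP$(\gamma)$ directly witnesses the required lower bound on $\gamma_g$ for every good $g$.

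First, I would observe that scaling by strictly positive factors $s_i > 0$ preserves the sign of every valuation: $s_i v_i(g) > 0$ if and only if $v_i(g) > 0$. Consequently, for every good $g$, the set $\{k \in [n] : s_k v_k(g) > 0\}$ coincides with $\{k \in [n] : v_k(g) > 0\}$, and the sets $P_i$ are identical in the original and scaled instances. This is important because the range parameter $\gamma_g$ is defined in terms of the agents with \emph{nonzero} value for $g$.

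Next, fix an arbitrary good $g \in [m]$ and, assuming some agent values $g$ positively (otherwise $\gamma_g = 1$ by convention and there is nothing to check), let $i^\star \in \argmax_{k \in [n]} s_k v_k(g)$ and $j^\star \in \argmin_{k:\, v_k(g) > 0}\, s_k v_k(g)$. Both indices lie in $\{k : v_k(g) > 0\}$, so $g \in P_{i^\star} \cap P_{j^\star}$. Then the LP$(\gamma)$ constraint corresponding to the pair $(i^\star, j^\star)$ and the good $g$ says precisely
\[
s_{j^\star} v_{j^\star}(g) \;\geq\; \gamma\, s_{i^\star} v_{i^\star}(g).
\]
Dividing through by the (strictly positive) maximum value $s_{i^\star} v_{i^\star}(g)$, this rearranges to the statement that the range parameter of $g$ in the scaled instance is at least $\gamma$. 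Taking the minimum over all goods $g \in [m]$ yields the claimed bound on the instance-level range parameter.

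There is no real obstacle here; the argument is essentially a direct verification, and the only subtlety worth flagging is handling the goods $g$ for which no agent has positive value (where the definition sets $\gamma_g = 1$) and ensuring that the indices realizing the max and min both correspond to agents with $g$ in their positively-valued set, so that the LP constraint indexed by $(i^\star, j^\star, g)$ is actually present.
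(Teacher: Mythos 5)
Your proposal is correct and follows essentially the same route as the paper: it invokes the LP$(\gamma)$ constraint for the pair of agents realizing the maximum and the minimum nonzero scaled value of each good, after noting that positive scaling preserves the sets $P_i$. The explicit handling of goods valued by no agent and the check that the max/min indices lie in $P_{i^\star} \cap P_{j^\star}$ are fine additions but do not change the substance of the argument.
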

\begin{proof}
For every pair of agents $i, j \in [n]$ and each good $g \in P_i \cap P_j$, the feasibility (and positivity) of $s_i$-s gives us 
\begin{align*}
\frac{s_j v_j(g)}{s_i v_i(g)} \geq \gamma.
\end{align*} 
Hence, for each good $g$ we have
$\frac{\displaystyle \min_{j: v_j(g)>0} \ s_j v_j(g)}{\max_{i} \ s_i v_i(g)} \geq \gamma$.
Equivalently, the scaled valuations $v'_i(g) \coloneqq  s_i v_i(g)$ satisfy 
\begin{align*}
\frac{\displaystyle \min_{j:  v'_j(g)>0} \ v'_j(g)}{\max_{i} \  v'_i(g)} \geq \gamma \ \ \text{  for each good $g$}. 
\end{align*}
Therefore, in the instance obtained after scaling, the range parameter is at least $\gamma$ (see equation (\ref{ineq:gamma-g})). 
\end{proof}

Lemmas \ref{lemma:LPrange} and \ref{lemma:feasibleLP} imply that, by performing a binary search over $\gamma \in (0,1]$ and up to an arbitrary precision, we can find $\gamma^*$ (and corresponding scaling factors $s^*_i$-s) in polynomial time.  
\end{document}